\RequirePackage{amsmath}
\documentclass[copyright,creativecommons]{eptcs}
\providecommand{\event}{AFL 2026} % Name of the event you are submitting to

\usepackage{iftex}

\ifpdf
  \usepackage{underscore}         % Only needed if you use pdflatex.
  \usepackage[T1]{fontenc}        % Recommended with pdflatex
\else
  \usepackage{breakurl}           % Not needed if you use pdflatex only.
\fi

\title{An Analysis of Decision Problems for Relational Pattern Languages under Various Constraints}
\author{Klaus Jansen\quad
Dirk Nowotka\quad
Lis Pirotton\quad
Corinna Wambsganz \quad
Max Wiedenhöft
\institute{Department of Computer Science, Kiel University, Kiel, Germany}
\email{\{kj,dn,lpi,cwa,maw\}@informatik.uni-kiel.de}
}
\def\titlerunning{Decision Problems for Relational Pattern Languages}
\def\authorrunning{K. Jansen et al.}

\usepackage[]{lineno}

\usepackage{amsfonts, amssymb, dsfont, amsthm}
\usepackage{graphicx}
\usepackage[utf8]{inputenc}
\usepackage[T1]{fontenc}
\usepackage{color}
\usepackage{float}
\usepackage{wasysym}
\usepackage{blkarray,booktabs,bigstrut}
\usepackage{verbatim}
\usepackage{tabularx}
\usepackage{upgreek}
\usepackage[nocompress]{cite}
\usepackage{hyperref}
\usepackage{cleveref}
\usepackage{enumitem}
\usepackage{marvosym}
\usepackage{booktabs}
\usepackage{tabularx}
\usepackage{array}
\usepackage[table]{xcolor}

\DeclareMathOperator{\SubSeq}{SubSeq}

\DeclareMathOperator{\var}{var}

\def\N{\mathbb{N}}
\def\ta{\mathtt{a}}
\def\tb{\mathtt{b}}
\def\tc{\mathtt{c}}

\def\relpat{\mathtt{RelPat}_{\Sigma,R}}

\newcommand{\al}{\operatorname{alph}}

\newcolumntype{C}{>{\hspace{3pt}}c<{\hspace{3pt}}}

\newtheorem{theorem}{Theorem}
\newtheorem{corollary}[theorem]{Corollary}
\newtheorem{lemma}[theorem]{Lemma}
\newtheorem{proposition}[theorem]{Proposition}
\newtheorem*{remark}{Remark}

\begin{document}
%\linenumbers

\maketitle

\begin{abstract}
Patterns are words with terminals and variables. The language of a pattern is the set of words obtained by uniformly substituting all variables with words that contain only terminals. In their original definition, patterns only allow for multiple distinct occurrences of some variables to be related by the equality relation, represented by using the same variable multiple times. In an extended notion, called relational patterns and relational pattern languages, variables may be related by arbitrary other relations, achieved by using regular patterns and relating individual variables independently from the patterns structure separately. We extend the ongoing investigation of the main decision problems for patterns (namely, the equivalence problem, the inclusion problem, and the membership problem) to relational pattern languages under a wide range of relevant individual relations, providing a comprehensive foundation in all three research directions.
\vspace{0.2cm}

\textbf{Keywords:} Patterns, Pattern Languages, Relational Pattern Languages, Equivalence, Inclusion, Membership, Decidability
\end{abstract}

% === INTRO === %
\section{Introduction}
A \emph{pattern} (with variables) is a finite word that consists only of symbols from a finite set of \emph{(terminal) letters} $\Sigma = \{\ta_1, ..., \ta_\sigma\}$ and from an infinite but countable set of \emph{variables} $X = \{x_1, x_2, ...\}$ with $\Sigma\cap X = \emptyset$. By applying terminal preserving morphisms, called \emph{substitutions}, to a pattern, we can obtain words that consist only of terminal letters. The \emph{language} of a pattern is then just the set of all words that can be obtained through the application of arbitrary substitutions.
Originally, pattern languages as defined and introduced by Angluin~\cite{Angluin1980} only considered words that are obtained by what is now called \emph{non-erasing substitutions}, i.e., substitutions where each variable is substituted by a word of at least length one. Hence, these languages are now also called \emph{non-erasing (NE) pattern languages}. Shinohara~\cite{Shinohara1983} extended this notion by allowing for the empty word to be used to substitute a variable, introducing so-called \emph{extended/erasing (E) pattern languages}. 

For example, consider the pattern $\alpha = x_1 \ta\ta x_1 \tb x_2$. Using the non-erasing substitution $h$ that sets $h(x_1) = \tb\tb$ and $h(x_2)= \ta$, we obtain the word $h(\alpha) = \tb\tb\ta\ta\tb\tb\tb\ta$. If we also considered the E-pattern language of $\alpha$, we could set all variables to the empty word and obtain the word $\ta\ta\tb$ which we could not obtain in the NE pattern language of $\alpha$.

As patterns form a natural and compact device for the definition of formal languages and due to their practical and simple definition, they occur in numerous areas in computer science and discrete mathematics. Among others, these include for example the areas of unavoidable patterns in words~\cite{Jiang1994,lothaire1997}, word equations~\cite{lothaire1997}, algorithmic learning theory~\cite{Angluin1980,FERNAU201844,Shinohara1995}, database theory~\cite{FreydenbergerP21,SchmidSchweikardtPODS2022}, or the theory of extended regular expressions with back references\cite{FREYDENBERGER20191}. Furthermore, many practical areas such as machine learning, database systems, or bio-informatics deal with the question of identifying or describing patterns in (sets of) strings~\cite{Mousawi2024}, for example in the analysis of protein data~\cite{Arikawa1993}, in pattern matching~\cite{10.1007/978-3-642-03784-9_29}, or the design of algorithms for program synthesis~\cite{10.1145/4472.4476}.

Three main decision problems emerge when it comes to the investigation of patterns and pattern languages. These are the general \emph{equivalence problem}, the \emph{inclusion problem}, and the \emph{membership problem} and its variations~\cite{gawrychowski_et_al:LIPIcs.MFCS.2021.48,Manea2022, Fleischmann2023}, all of which are considered in the non-erasing (NE) as well as erasing (E) cases. The equivalence problem determines whether the languages of two given patterns are equal. In Angluin's original work~\cite{Angluin1980}, it has been shown to be trivially decidable for all NE pattern languages. For erasing pattern languages, however, the general decidability of this problem has been the \emph{most significant open problem} in the field of decision problems for patterns~\cite{Jiang1995,Reidenbach2004-1,Reidenbach2004-2,Ohlebusch1996,Reidenbach2007}. The inclusion problem checks whether the language of a pattern is included as a subset in another pattern's language. It has been shown to be undecidable for unbounded alphabets by Jiang et al.~\cite{Jiang1995}. Their result has been extended by Freydenberger and Reidenbach in~\cite{Freydenberger2010} as well as by Bremer and Freydenberger in~\cite{BREMER201215}, where it has been shown to be undecidable for all alphabets $\Sigma$ of fixed size $|\Sigma|\geq 2$, for erasing and non-erasing pattern languages. The membership problem determines whether a word belongs to a pattern's language. It has been shown to be generally NP-complete in the non-erasing as well as erasing cases~\cite{Angluin1980,Jiang1994} and has been investigated extensively in various communities~\cite{AMIR2007514, Ordyniak2015, FERNAU2015287, Fernau2016, IBARRA1995179, Manea2019, Manea2020}.
In the special case of terminal-free patterns (containing only variables), the membership problem remains hard in both erasing and non-erasing cases, as does the non-erasing equivalence problem. However, inclusion for terminal-free E-pattern languages is NP-complete, as it reduces to checking for a morphism between patterns~\cite{Jiang1995,DBLP:journals/ipl/EhrenfeuchtR79a,File1988}, resulting also in decidability of equivalence. In contrast, Saarela~\cite{saarela:LIPIcs.ICALP.2020.140} showed that inclusion for terminal-free NE pattern languages is undecidable over a binary alphabet.

Various extensions to the notion of pattern languages have been introduced over time either to get closer to an answer for the remaining open problems or to obtain additional expressibility that is usable in some practical context. Some examples include the bounded scope coincidence degree, patterns with bounded treewidth, $k$-local patterns, or strongly-nested patterns (see \cite{Day2018} and references therein). Koshiba~\cite{Koshiba1995} introduced so-called \emph{typed patterns} where the substitutions of variables can be restricted by arbitrary recursive languages, so-called \emph{types}. This has been recently extended by Geilke and Zilles in~\cite{Geilke2011} to the notion of \emph{relational patterns} and \emph{relational pattern languages}. Here, relations other than equality may be used to restrict the valid substitution of patterns. 

This can be achieved by using regular patterns (each variable occurs only once) together with separate relational constraints on variables. The resulting language consists of all words obtained by substitutions satisfying these constraints, thereby extending classical pattern languages to arbitrary relations between variables. Consider, for example, the pattern $\beta = x_1\tc\tc x_2$. Assume $x_1$ and $x_2$ should be substituted by the reversal of each other, i.e., $(h(x_1),h(x_2))\in r_{rev}$, where $h$ is a substitution and $r_{rev}$ is the reversal relation defined by $r_{rev} := \{\ (w,w^R) \mid w\in\Sigma^*\ \}$ (where $w^R$ is the reversal of a word $w$). Then, a substitution $h$ with $h(x_1) = \ta\tb$ and $h(x_2) = \tb\ta$ is called \emph{valid} regarding the pattern $\beta$, we get $h(\beta) = \ta\tb\tc\tc\tb\ta$, and we say that $h(\beta)$ is in the relational pattern language of $\beta$ under the relational constraints that relate $x_1$ and $x_2$ by $r_{rev}$. If we considered a substitution $h'$, however, in which $h'(x_1) = \ta\tb$ and also $h'(x_2)=\ta\tb$, then clearly $h'(x_1) \neq h'(x_2)^R$ and therefore $h'$ is not a valid substitution in this case, resulting in $h'(\beta)$ not being in that relational pattern language of $\beta$.

Geilke and Zilles~\cite{Geilke2011} established general properties of relational pattern languages. For equal length, Holte et al.~\cite{pmlr-v167-holte22a} proved decidability of NE-equivalence for alphabets of size at least three, and Mousawi and Zilles~\cite{Mousawi2024,mousawi2026positivecharacteristicsetsrelational} showed the corresponding E-result. An earlier extension to the binary case is currently claimed to be flawed in their arXiv paper~\cite{mousawi2026positivecharacteristicsetsrelational}. However, they still claim a classification for a large subclass of patterns in the binary setting, while the general case remains open. For reversal, positive characteristic sets exist in the NE case, but no effective construction is known, while no such family exists in the binary E case~\cite{Mousawi2024,mousawi2026positivecharacteristicsetsrelational}. In addition to that, in the cases of two special kinds of relations/constraints, namely regular constraints \cite{Nowotka2024} or length constraints\cite{Nowotka2025}, it was recently shown, among other results, that the still open equivalence problem for erasing pattern languages actually becomes undecidable if these constraints are allowed in addition to equality between variables. This sets a first upper bound of constraints necessary to obtain undecidability for that problem. In comparison to other relations, however, these constraints appear to be very strong, leaving the question of what happens in the cases of weaker relations that do not restrict the substitution space with explicit values. Except the previously stated results, this area seems to be largely unexplored.

This paper considers the following relations: \emph{equality} ($r_{=}$), \emph{equal length} ($r_{|w|}$), \emph{subsequence} ($r_{ssq}$), \emph{abelian equivalence} ($r_{ab}$), \emph{alphabet permutation} ($r_{perm}$), \emph{reversal} ($r_{rev}$), (nonempty) \emph{commutation} ($r_{com^+}$, $r_{com^*}$), the \emph{star relation} ($r_{*}$), and \emph{common (non)emptiness} ($r_{\varepsilon_=}$). These relations are not meant to form an exhaustive list. Rather, they constitute a selection of natural and representative binary relations on words, covering several standard notions of similarity. Many other relations could be studied within this framework (e.g., see discussion in Section~\ref{sec:conclusion}). We leave such extensions for future work. 
We study the equivalence, inclusion, and membership problems under these relations, providing a comprehensive overview of many cases (Table~\ref{tab:complexity-full}).
Section 2 introduces the required notation and models. Sections 3 to 5 address the equivalence, inclusion, and membership problems, respectively. Section 6 discusses open problems and directions for future research. Due to space constraints, some proofs and extended proof sketches are omitted from this proceedings version and are available in the appendix of the full version~\cite{thisPaper} (the corresponding results are marked with (\Rightscissors) and a reference to the respective subsection).

\begin{table}[ht]
    \centering
    \begingroup
    % Compact result/source pointers used only inside this table.
    \newcommand{\resultref}[1]{
        \textsuperscript{\normalfont\scriptsize #1}
    }

    \setlength\tabcolsep{4pt}
    \renewcommand{\arraystretch}{1.2}
    \resizebox{\textwidth}{!}{
    \begin{tabular}{ | C | C | C | C | C | C | C | C | C | C | C | }
        \hline
         & $r_{=}$ 
         & $r_{|w|}$
         & $r_{ssq}$ 
         & $r_{ab}$ 
         & $r_{perm}$ 
         & $r_{rev}$ 
         & $r_{com^*}$
         & $r_{com^+}$
         & $r_{*}$
         & $r_{\varepsilon_=}$\\
        \hline

        E ($\in$)
        & NPC\resultref{a}
        & \textcolor{red}{NPC}\resultref{T15}
        & \textcolor{red}{NPC}\resultref{T15}
        & \textcolor{red}{NPC}\resultref{T15}
        & \textcolor{red}{NPC}\resultref{T15}
        & \textcolor{red}{NPC}\resultref{T15}
        & \textcolor{red}{NPC}\resultref{T15}
        & \textcolor{red}{NPC}\resultref{T15}
        & \textcolor{red}{NPC}\resultref{T15} 
        & \textcolor{red}{NPC}\resultref{T15} \\
        \hline

        E ($\subseteq$)
        & UD\resultref{b}
        & \textcolor{lightgray}{\textbf{?}}
        & \textcolor{red}{UD}\resultref{C8}
        & \textcolor{red}{UD}\resultref{T12}
        & \textcolor{lightgray}{\textbf{?}}
        & \textcolor{red}{UD}\resultref{T12}
        & \textcolor{lightgray}{\textbf{?}}
        & \textcolor{lightgray}{\textbf{?}}
        & \textcolor{red}{UD}\resultref{C8}
        & \textcolor{red}{D}\resultref{T10}\\
        \hline

        E ($=$)
        & \textcolor{lightgray}{\textbf{?}}
        & $\mathrm{D}_{|\Sigma|\geq 3}$\resultref{d}
        & \textcolor{red}{$\geq r_=$}\resultref{P7}
        & \textcolor{lightgray}{\textbf{?}}
        & \textcolor{lightgray}{\textbf{?}}
        & \textcolor{red}{$\geq r_=$}\resultref{P7}
        & \textcolor{lightgray}{\textbf{?}}
        & \textcolor{lightgray}{\textbf{?}}
        & \textcolor{red}{$\geq r_=$}\resultref{P7}
        & \textcolor{red}{D}\resultref{T5}\\
        \hline

        NE ($\in$)
        & NPC\resultref{e}
        & \textcolor{red}{NPC}\resultref{T15}
        & \textcolor{red}{NPC}\resultref{T15}
        & \textcolor{red}{NPC}\resultref{T15}
        & \textcolor{red}{NPC}\resultref{T15}
        & \textcolor{red}{NPC}\resultref{T15}
        & \textcolor{red}{NPC}\resultref{T15}
        & \textcolor{red}{NPC}\resultref{T15}
        & \textcolor{red}{NPC}\resultref{T15} 
        & \textcolor{red}{P}\resultref{C16}\\
        \hline

        NE ($\subseteq$)
        & UD\resultref{b}
        & \textcolor{lightgray}{\textbf{?}}
        & \textcolor{red}{UD}\resultref{C8}
        & \textcolor{red}{UD}\resultref{T9}
        & \textcolor{lightgray}{\textbf{?}}
        & \textcolor{red}{UD}\resultref{T9}
        & \textcolor{lightgray}{\textbf{?}}
        & \textcolor{lightgray}{\textbf{?}}
        & \textcolor{red}{UD}\resultref{C8}
        & \textcolor{red}{P}\resultref{S4.1}\\
        \hline

        NE ($=$)
        & P\resultref{e}
        & $\mathrm{D}_{|\Sigma|\geq 3}$\resultref{c}
        & \textcolor{red}{$?^{\dagger}$}\resultref{L1}
        & \textcolor{red}{P}\resultref{C4}
        & \textcolor{lightgray}{\textbf{?}}
        & \textcolor{red}{$?^{\dagger}$}\resultref{L1}
        & \multicolumn{2}{c|}{
            \textcolor{red}{P}\resultref{C4}
        }
        & \textcolor{red}{$?^{\dagger}$}\resultref{L1}
        & \textcolor{red}{P}\resultref{S3.1}\\
        \hline
    \end{tabular}
    }

    \caption{
        Current state of decision problems for relational pattern languages
        under the considered relations. Red entries are new results.
        NPC = NP-complete, P = polynomial-time decidable,
        D/UD = decidable/undecidable, E/NE = erasing/non-erasing,
        $\geq r_=$ denotes a polynomial-time many-one reduction from
        the equality case, $\dagger$ marks non-trivial necessary
        conditions for equivalence to hold. 
    }
    \label{tab:complexity-full}

    \vspace{2mm}

    \begin{minipage}{.98\textwidth}
        \scriptsize
        \raggedright
        \textit{Pointers.}
        Superscripts T, C, P, L, and S refer to theorem, corollary,
        proposition, lemma, and section numbers in this paper, respectively.\\
        \resultref{a} Jiang et al.~\cite{Jiang1994};
        \resultref{b} Jiang et al.~\cite{Jiang1995},
        Freydenberger and Reidenbach~\cite{Freydenberger2010},
        and Bremer and Freydenberger~\cite{BREMER201215};\\
        \resultref{c} Holte et al.~
        \cite[Theorem~7 and Corollary~16]{pmlr-v167-holte22a};
        \resultref{d} Mousawi and Zilles~
        \cite[Theorem~5]{mousawi2026positivecharacteristicsetsrelational};
        \resultref{e} Angluin~\cite{Angluin1980}.
    \end{minipage}
    \endgroup
\end{table}

\vspace{-1.2cm} % used to remove the big gap between Section 1 and Section 2 after Table 1. Necessary, to keep Section 2 close enough for Section 2 to stop after Page 4 such that Section 3 can start on its own page.

\vspace{0.5cm}

% === PRELIMS === %
\section{Preliminaries}
Let $\N$ denote the natural numbers $\{1, 2, 3, \dots\}$ and let $\N_0 := \N \cup \{0\}$.
For $n,m \in \mathbb{N}$ set $[m,n] := \{k \in \mathbb{N} \mid m \leq k \leq n\}$. 
Denote $[n] := [1,n]$ and $[n]_0 := [0,n]$. 
An \emph{alphabet} $\Sigma$ is a non-empty finite set whose elements are called \emph{letters}.  
A \emph{word} is a finite sequence of letters from $\Sigma$. 
Let $\Sigma^*$ be the set of all finite words over $\Sigma$ and set $\Sigma^+ := \Sigma^* \setminus \{\varepsilon\}$.
We call the number of letters in a word $w \in \Sigma^*$ \emph{length} of $w$, denoted by $|w|$.
Therefore, we have $|\varepsilon| = 0$. 
Denote $\Sigma^k := \{w \in \Sigma^* \mid |w| = k\}$.
For $w \in \Sigma^*$, let $w[i]$ denote $w$'s $i^{th}$ letter for all $i \in [\vert w \vert]$. 
For compactness reasons, we denote $w[i] \cdots w[j]$ by $w[i \cdots j]$ for all $i,j \in [\vert w \vert]$ with $i < j$.
Denote by $|w|_\ta = |\{\ i\in [|w|] \mid w[i] = \ta\}|$ the number of times the letter $\ta$ occurs in $w$.
Set $\al(w) :=  \{\ta \in \Sigma \mid \exists i \in [\vert w \vert] : w[i] = \ta\}$ as $w$’s alphabet.
A \emph{subsequence} $u$ of a word $w$ is a word obtained by deleting arbitrary letters of $w$. Denote by $\SubSeq(w)$ the set of subsequences of $w$. Two words $u,v\in\Sigma^*$ are abelian equivalent, denoted by $u\equiv_{ab}v$, iff $|u|_\ta = |v|_\ta$ for all $\ta\in\Sigma$.

Let $X := \{x_1, x_2, ...\}$ be a countable set of variables such that $\Sigma \cap X = \emptyset$.
A \emph{pattern} is then a non-empty, finite word over $\Sigma \cup X$.
The set of all patterns over $\Sigma \cup X$ is denoted by $Pat_\Sigma$. 
For example, $x_1 \ta x_2 \tb \ta x_2 x_3$ is a pattern over $\Sigma = \{\ta,\tb\}$ with $x_1,x_2,x_3\in X$.
For a pattern $\alpha\in Pat_\Sigma$, let $\var(\alpha) := \{\ x \in X\ |\ |\alpha|_x \geq 1\ \}$ denote the set of variables occurring in $\alpha$.
A \emph{substitution of $\alpha$} is a morphism $h : (\Sigma \cup X)^* \to \Sigma^*$ such that $h(\ta) = \ta$ for all $\ta \in \Sigma$ and 
$h(x) \in \Sigma^*$ for all $x \in X$. 
If we have $h(x) \neq \varepsilon$ for all $x \in \var(\alpha)$, we call $h$ a \emph{non-erasing substitution} for $\alpha$. 
Otherwise $h$ is an \emph{erasing substitution} for $\alpha$. The set of all substitutions w.r.t.~$\Sigma$ is denoted by $H_\Sigma$.
If $\Sigma$ is clear from the context, we may write just $H$.
For a pattern $\alpha\in Pat_\Sigma$, its erasing pattern language $L_E(\alpha)$ and non-erasing pattern language $L_{NE}(\alpha)$
are defined, respectively, by
\begin{align*}
	L_{E}(\alpha) &= \{\ h(\alpha)\ |\ h\in H, h(x) \in \Sigma^* \text{ for all } x\in\var(\alpha)\}, \text{ and } \\
	L_{NE}(\alpha) &= \{\ h(\alpha)\ |\ h\in H, h(x) \in \Sigma^+ \text{ for all } x\in\var(\alpha)\}.
\end{align*}
Let $RegPat_\Sigma = \{ \alpha\in Pat_\Sigma \mid \forall x\in X: |\alpha|_{x} \leq 1\}$ denote the set of all regular patterns, i.e., the set of all patterns where each variable occurs only once. Let $R$ be some set of relations and denote for each $n\in\N$ by $R_n\subset R$ the subset of $n$-ary relations in $R$. Define $Rel_{R} = \{ (r,x_{i_1},... ,x_{i_n}) \mid r\in R_n, x_{i_1},... ,x_{i_n} \in X, i_1,...i_n\in\N, n\in\N \}$ to be the set of all combinations to relate variables in $X$ over the set of relations $R$. Then, a pair $(\alpha,r_\alpha)\in RegPat_\Sigma \times 2^{Rel_R}$ is a relational pattern over the terminal alphabet $\Sigma$ and the set of relations $R$. We denote the set of all relational patterns over $\Sigma$ and $R$ by $\relpat$. For some $(\alpha,r_\alpha)\in \relpat$ and $h\in H$, we say that $h$ is a $r_\alpha$-valid substitution if for all $(r',x_{i_1},...,x_{i_n})\in r_\alpha$ we have $(h(x_{i_1}),...,h(x_{i_n})) \in r'$, i.e., all variables are substituted by words related to each other regarding the relation $r'$. 
The set of all $r_\alpha$-valid substitutions w.r.t.~$\Sigma$ is denoted by $H_{\Sigma,r_\alpha}$.
If $\Sigma$ or $r_\alpha$ are apparent, we may also just write $H_{r_\alpha}$ or $H$. 
For any $(\alpha,r_\alpha)\in\relpat$, we denote its \emph{erasing relational pattern language} $L_E(\alpha,r_\alpha)$ and \emph{non-erasing relational pattern language} $L_{NE}(\alpha,r_\alpha)$, respectively, by
\begin{align*}
	L_E(\alpha,r_\alpha) &= \{\ h(\alpha)\ |\ h\in H_{\Sigma,r_\alpha}\ \} \\
	L_{NE}(\alpha,r_\alpha) &= \{\ h(\alpha)\ |\ h\in H_{\Sigma,r_\alpha} \text{ and } h(x)\in\Sigma^+\text{ for all } x\in\var(\alpha)\ \}.
\end{align*}
In this paper, we primarily investigate the considered decision problems under binary relations, i.e., $|R| = 1$ and, if $R = \{r\}$, then $r$ is 2-ary/binary. In particular, we are interested in the selection of specific customary binary relations given in Table~\ref{tab:relations}. Denote by $\mathcal{R}$ the set of all these relations.

\begin{table}[h!]
    \centering
    \small
    \setlength{\tabcolsep}{6pt}
    \renewcommand{\arraystretch}{1.08}
    \begin{tabularx}{\textwidth}{@{}>{\raggedleft\arraybackslash}p{2.2cm}X@{}}
        \toprule
        \textbf{Relation} & \textbf{Description} \\
        \midrule
        $r_{=}$ 
            & equality of words, i.e., $(u,v)\in r_{=}$ iff $u = v$. \\

        $r_{|w|}$ 
            & length equality of words, i.e., $(u,v)\in r_{|w|}$ iff $|u| = |v|$. \\

        $r_{ssq}$ 
            & subsequence relation, i.e., $(u,v)\in r_{ssq}$ iff $u\in\SubSeq(v)$. \\

        $r_{ab}$ 
            & abelian equivalence of words, i.e., $(u,v)\in r_{ab}$ iff $u \equiv_{ab} v$. \\

        $r_{perm}$ 
            & alphabet permutation of words, i.e., $(u,v)\in r_{perm}$ iff $u$ can be obtained from $v$ by a bijective renaming of the letters. \\

        $r_{rev}$ 
            & reversal of words, i.e., $(u,v)\in r_{rev}$ iff $u = v^R$. \\

        $r_{com^*}$ 
            & commutation between words, i.e., $(u,v)\in r_{com^*}$ iff there exists $z\in\Sigma^*$ such that $u,v\in\{z\}^*$. \\

        $r_{com^+}$ 
            & nonempty commutation between words, i.e., $(u,v)\in r_{com^+}$ iff there exists $z\in\Sigma^+$ such that $u,v\in\{z\}^+$. \\

        $r_{*}$ 
            & word in the Kleene star of the other word, i.e., $(u,v)\in r_{*}$ iff $u\in\{v\}^*$. \\

        $r_{\varepsilon_=}$ 
            & common emptiness or common nonemptiness, i.e., $(u,v)\in r_{\varepsilon_=}$ iff $u = \varepsilon = v$ or $u,v\in\Sigma^+$. \\
        \bottomrule
    \end{tabularx}
    \caption{The set of relations $\mathcal{R}$ considered in this paper.}
    \label{tab:relations}
\end{table}

% === EQUIVALENCE === %
\section{The Equivalence Problem}
We first study equivalence, which in the erasing case already contains the long-standing open problem for $r_=$. In the non-erasing case, we characterize relations for which equivalence is determined by equality of the patterns and of the reflexive, symmetric, and transitive closures of the constraints, covering $r_{ab}$, $r_{com^+}$, and $r_=$. For $r_{rev}$, $r_{ssq}$, and $r_*$, we show that this characterization is insufficient.

In the erasing case, we identify $r_{\varepsilon_=}$ as another relation, besides $r_{|w|}$, with decidable equivalence, while for $r_{rev}$, $r_*$, and $r_{ssq}$, equivalence under $r_=$ polynomial-time many-one reduces to the corresponding equivalence problems.

\subsection{The Non-Erasing Case}

Given any relational pattern $(\alpha,r_\alpha)\in\relpat$, we denote by $[r_\alpha]$ the reflexive, symmetric, and transitive closure on the set of binary related variables in $r_\alpha$. For example, if $(r_=,x_1,x_2),(r_=,x_2,x_3)\in r_{\alpha}$, then we would have $(r_=,x_i,x_j)\in[r_\alpha]$, for any combination $i,j\in\{1,2,3\}$. For the remainder of this section, we assume w.l.o.g. a normal form on the patterns such that all variables are always introduced inside a pattern with increasing indices, starting from 1 (i.e. $x_1$, $x_2$, $x_3$, and so on). The first result gives us a strong condition for relations $r$ that are reflexive on all letters of the alphabet and do not relate two distinct letters.

\begin{lemma}\label{lemma:equivalence-length1-rel-property}
    Assume $|\Sigma|\geq 2$ and let $R = \{r\}$, for a relation $r$ with the following properties: For all $\ta\in\Sigma$, we have $(\ta,\ta)\in r$, and if $(\ta,\tb)\in r$, for two letters $\ta,\tb\in\Sigma$, then $\ta = \tb$. Let $(\alpha,r_\alpha),(\beta,r_\beta)\in\relpat$ be two patterns in normal form. If $L_{NE}(\alpha,r_\alpha) = L_{NE}(\beta,r_\beta)$ then $\alpha = \beta$ and $[r_\alpha] = [r_\beta]$.
\end{lemma}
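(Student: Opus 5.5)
The plan is to look only at the \emph{shortest} words of the two languages. In the non-erasing case these shortest words arise from substitutions mapping every variable to a single letter. On single letters, $r$ coincides with equality, so these words encode exactly the terminal skeleton of the pattern and the equivalence classes of $[r_\alpha]$.

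\textbf{Step 1 (shortest words).} For $(\alpha,r_\alpha)$, every $r_\alpha$-valid non-erasing substitution $h$ satisfies $|h(\alpha)|\geq|\alpha|$, with equality iff $|h(x)|=1$ for all $x\in\var(\alpha)$. Let $S_\alpha$ be the set of words of length $|\alpha|$ in $L_{NE}(\alpha,r_\alpha)$. I will show that $S_\alpha$ is nonempty and describe it precisely.
\begin{itemize}
\item Take a substitution $h$ with $h(x)\in\Sigma$ for every variable. It is $r_\alpha$-valid iff $(h(x),h(y))\in r$ for every $(r,x,y)\in r_\alpha$. By the hypotheses on $r$, this holds iff $h(x)=h(y)$ for every such pair.
\item Consequently, such an $h$ is valid iff it is constant on every equivalence class of $[r_\alpha]$. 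Reflexivity $(\ta,\ta)\in r$ gives the "if" direction; the fact that $r$ relates no two distinct letters gives the "only if" direction, and transitivity of equality extends it to the closure.
\end{itemize}
Hence $S_\alpha$ consists exactly of the words obtained by keeping the terminals of $\alpha$ and choosing one arbitrary letter per class of $[r_\alpha]$. In particular $S_\alpha\neq\emptyset$, and $|\alpha|$ is the minimal word length in $L_{NE}(\alpha,r_\alpha)$. Equality of the languages therefore gives $|\alpha|=|\beta|$ and $S_\alpha=S_\beta$.

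\textbf{Step 2 (recovering $\alpha$).} Using $|\Sigma|\geq 2$, I claim that position $i$ of $\alpha$ is a terminal $\ta$ iff every word in $S_\alpha$ has $\ta$ at position $i$.
\begin{itemize}
\item If position $i$ is a terminal $\ta$, every word in $S_\alpha$ has $\ta$ there, by Step 1.
\item If position $i$ holds a variable, its class can be assigned two different letters. So at least two distinct letters occur at position $i$ across $S_\alpha$.
\end{itemize}
Thus $S_\alpha$ determines the set of variable positions and the terminals at all other positions. Since $\alpha$ is regular and in normal form, the variables at those positions are forced to be $x_1,x_2,\dots$ from left to right. Hence $S_\alpha=S_\beta$ yields $\alpha=\beta$.

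\textbf{Step 3 (recovering $[r_\alpha]$).} Let $i,j$ be two variable positions, carrying $x_k$ and $x_l$. I claim that $(r,x_k,x_l)\in[r_\alpha]$ iff every word of $S_\alpha$ carries the same letter at positions $i$ and $j$.
\begin{itemize}
\item If $x_k,x_l$ lie in the same class of $[r_\alpha]$, every valid single-letter substitution assigns them the same letter, by Step 1.
\item If they lie in different classes, assign distinct letters $\ta\neq\tb$ to the two classes and arbitrary letters to all others. This gives a word in $S_\alpha$ that differs at positions $i$ and $j$.
\end{itemize}
So $S_\alpha$ determines $[r_\alpha]$. Together with $\alpha=\beta$, the equality $S_\alpha=S_\beta$ then gives $[r_\alpha]=[r_\beta]$.

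\textbf{Main obstacle.} The only delicate point is Step 1: the valid single-letter substitutions must be exactly those constant on the classes of $[r_\alpha]$, even though $r$ itself need not be symmetric or transitive on longer words. This is resolved by the observation that, restricted to $\Sigma\times\Sigma$, $r$ is exactly the identity relation, which is already an equivalence relation.
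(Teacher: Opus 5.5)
Your proof is correct and uses essentially the same idea as the paper: both arguments look only at the words of length $|\alpha|$, which arise exactly from single-letter substitutions, and on those $r$ coincides with equality and forces constancy on the classes of $[r_\alpha]$. The paper proves $|\alpha|=|\beta|$, $\alpha=\beta$ and $[r_\alpha]=[r_\beta]$ each by contradiction with an explicit witness word, whereas you read $\alpha$ and $[r_\alpha]$ off $S_\alpha$ as invariants; this also covers the case of two distinct terminals at the first differing position, which the paper's ``w.l.o.g.'' skips.
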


\begin{proof}
    First, we see that $|\alpha| = |\beta|$ must hold. Suppose otherwise and assume w.l.o.g. that $|\alpha| < |\beta|$. Take some $r_\alpha$-valid substitution $h$ that substitutes each variable by just a single letter. Then $|h(\alpha)| = |\alpha|$. In the NE-case, there can't exist any $r_\beta$-valid substitution $h'$ with $|h'(\beta)| = |\alpha| < |\beta|$. A contradiction.

    Now suppose $\alpha \neq \beta$. As both are regular patterns in normal form, that means there exists a first position $i\in[|\alpha|]$ such that w.l.o.g. $\alpha[i] = \ta$, for some letter $\ta\in\Sigma$, and $\beta[i] = x_j$, for some variable $x_j\in X$. Now let $h$ be a $r_\beta$-valid substitution that substitutes each variable by just a single letter and, in particular, set $h(x_j) = \tb$, for some letter $\tb$ with $\ta\neq\tb$. As $|\alpha| = |\beta|$, any $r_\alpha$-valid substitution $h'$ with $h'(\alpha) = h(\beta)$ would have to match $h'(\alpha[k]) = h(\beta[k])$, for all $k\in[|\alpha|]$, and, in particular, would need to match $h'(\alpha[i]) = h'(\ta) = \ta \neq \tb = h(x_j) = h(\beta[i])$. A contradiction. That concludes $\alpha = \beta$.

    Now suppose $[r_\alpha] \neq [r_\beta]$. Then there exist two positions $i,j\in[|\alpha|]$ such that $\alpha[i] = \beta[i] = x_{i'}$, for some $i'\in[|\alpha|]$, and $\alpha[j] = \beta[j] = x_{j'}$, for some $j'\in[|\alpha|]$, and w.l.o.g. $(r,x_{i'},x_{j'})\in [r_\alpha]$ but $(r,x_{i'},x_{j'})\notin [r_\beta]$. Now choose a substitution $h$ that substitutes each variable just by a single letter such that $h(x_{i'}) = \ta$ and $h(x_{j'}) = \tb$, for two distinct letters $\ta,\tb\in\Sigma$. As $r$ forces all blocks of symmetrically and transitively connected variables to be substituted by the same letter, in the case of length 1 substitutions, we know that $h$ cannot be $r_\alpha$-valid. However, there exists such an $h$ that is $r_\beta$-valid by choosing substitutions of the other variables accordingly. Hence, there exists such an $h(\beta)\in L_{NE}(\beta,r_\beta)$, but we always have $h(\beta)\notin L_{NE}(\alpha,r_\alpha)$. A contradiction.
\end{proof}

This result already confirms a very specific structure for all relations with the above stated properties. In particular, this covers the relations $r_{=}$, $r_{ab}$, $r_{ssq}$, $r_{rev}$, $r_{com^+}/r_{com^*}$ (equivalent for non-erasing pattern languages), and $r_*$. This property, however, is not exhaustive to characterize equivalence. So, adding to the previous property, if we now consider equivalence relations in general, we obtain the following.

\begin{lemma}\label{lemma:equivalence-equivalencerel-langproperty}
    Let $R = \{r\}$ for an equivalence relation $r$. Let $(\alpha,r_\alpha)\in\relpat$ be some relational pattern in normal form. We have $L_{NE}(\alpha,r_\alpha) = L_{NE}(\alpha,[r_\alpha])$.
\end{lemma}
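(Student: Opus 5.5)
We prove the claim by showing that the two sets of valid substitutions coincide, that is, $H_{\Sigma,r_\alpha} = H_{\Sigma,[r_\alpha]}$. Both languages are defined as the images of $\alpha$ under exactly these substitutions, and the non-erasing side condition $h(x)\in\Sigma^+$ for all $x\in\var(\alpha)$ is identical on both sides. So equality of the substitution sets immediately gives $L_{NE}(\alpha,r_\alpha) = L_{NE}(\alpha,[r_\alpha])$. We interpret $[r_\alpha]$, as in the paper, as the set of constraints $(r,x_i,x_j)$ such that $x_i$ and $x_j$ are connected in the reflexive, symmetric, transitive closure of the graph on variables whose edges are the pairs occurring in $r_\alpha$.

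\emph{Inclusion $H_{[r_\alpha]}\subseteq H_{r_\alpha}$.} This is immediate, since $r_\alpha\subseteq[r_\alpha]$. Any substitution satisfying all constraints of $[r_\alpha]$ in particular satisfies those of $r_\alpha$.

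\emph{Inclusion $H_{r_\alpha}\subseteq H_{[r_\alpha]}$.} This is the only step that uses the hypothesis that $r$ is an equivalence relation. Fix an $r_\alpha$-valid $h$ and a constraint $(r,x_i,x_j)\in[r_\alpha]$; we must show $(h(x_i),h(x_j))\in r$. We argue by cases on how the pair enters the closure.

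\begin{enumerate}
\item If $i=j$, then $(h(x_i),h(x_i))\in r$ by reflexivity of $r$.
\item Otherwise, there is a finite chain $x_i=y_0,y_1,\dots,y_k=x_j$ such that for each $\ell$ either $(r,y_\ell,y_{\ell+1})\in r_\alpha$ or $(r,y_{\ell+1},y_\ell)\in r_\alpha$. By validity of $h$, each such constraint gives a related pair of images. Using symmetry of $r$, we may orient every step so that $(h(y_\ell),h(y_{\ell+1}))\in r$. Induction on $k$ with transitivity of $r$ then yields $(h(x_i),h(x_j))\in r$.
\end{enumerate}

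The main point needing care is matching the syntactic closure on constraints with the semantic properties of $r$. Reflexivity of the closure may add constraints on variables that do not occur in $\alpha$. For such a variable $x$, any $h$ still satisfies $(h(x),h(x))\in r$ by reflexivity of $r$. Such constraints also leave the language unchanged, because $h(\alpha)$ does not depend on $h(x)$ for $x\notin\var(\alpha)$. The normal form assumption plays no real role in this argument.
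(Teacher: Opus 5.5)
Your proposal is correct and follows essentially the same route as the paper. The inclusion $L_{NE}(\alpha,[r_\alpha])\subseteq L_{NE}(\alpha,r_\alpha)$ comes from $r_\alpha\subseteq[r_\alpha]$, and the converse uses reflexivity, symmetry and transitivity of $r$ to show that every $r_\alpha$-valid substitution satisfies all constraints of $[r_\alpha]$. Your induction along chains of constraints is more explicit about transitivity than the paper's proof, which spells out only a single transitive step.
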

\begin{proof}
    ($\Rightarrow$): First, let $w\in L_{NE}(\alpha,r_\alpha)$. Then there exists some $r_\alpha$-valid substitution $h$ with $h(\alpha) = w$. Let $(r,x_i,x_j)\in r_\alpha$, for two variables $x_i,x_j\in\var(\alpha)$. Then $(h(x_i),h(x_j))\in r$ by definition. As $r$ is an equivalence relation, we also have $(h(x_j),h(x_i))\in r$, $(h(x_i),h(x_i))\in r$, and, if there exists some $(r,x_j,x_k)\in r_\alpha$ (hence $(h(x_j),h(x_k))\in r$), we also have $(h(x_i),h(x_k))\in r$. Hence, $h$ satisfies all constraints in $[r_\alpha]$ and thereby is $[r_\alpha]$-valid, resulting in $w\in L_{NE}(\alpha,[r_\alpha])$.

    ($\Leftarrow$): For the other direction, let $w\in L_{NE}(\alpha,[r_\alpha])$ and let $h$ be an $[r_\alpha]$-valid substitution such that $h(\alpha) = w$. Notice that $r_\alpha\subseteq[r_\alpha]$. Hence, all constraints in $r_\alpha$ are satisfied by $h$, resulting in $h$ being $r_\alpha$-valid. Thus, $w\in L_{NE}(\alpha,r_\alpha)$.
\end{proof}

Combining both previous results, we identify a set of relations for which equivalence of non-erasing languages can be characterized by equality of patterns and the corresponding closures.

\begin{theorem}\label{theorem:equivalence-subst1prop-equivrel-equality}
    Let $R = \{r\}$ for $r$ being an equivalence relation with the following properties: For all $\ta\in\Sigma$, we have $(\ta,\ta)\in r$, and if $(\ta,\tb)\in r$, for two letters $\ta,\tb\in\Sigma$, then $\ta = \tb$. Let $(\alpha,r_\alpha),(\beta,r_\beta)\in\relpat$ be two patterns in normal form. $L_{NE}(\alpha,r_\alpha) = L_{NE}(\beta,r_\beta)$ if and only if $\alpha = \beta$ and $[r_\alpha] = [r_\beta]$.
\end{theorem}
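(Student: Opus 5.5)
The theorem follows by combining Lemma~\ref{lemma:equivalence-length1-rel-property} and Lemma~\ref{lemma:equivalence-equivalencerel-langproperty}, one for each direction.

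For the \emph{only if} direction, note that $r$ is an equivalence relation. In particular it is reflexive, so $(\ta,\ta)\in r$ for every $\ta\in\Sigma$. By assumption, two distinct letters are never related. Hence $r$ satisfies exactly the hypotheses of Lemma~\ref{lemma:equivalence-length1-rel-property}. That lemma is stated for $|\Sigma|\geq 2$, so I assume the same here. For $|\Sigma|=1$ the letter-separation hypothesis is vacuous and the statement needs separate care, so I will either inherit the standing assumption $|\Sigma|\geq2$ or treat the unary case as an exception. Applying Lemma~\ref{lemma:equivalence-length1-rel-property} to $L_{NE}(\alpha,r_\alpha) = L_{NE}(\beta,r_\beta)$ then yields $\alpha=\beta$ and $[r_\alpha]=[r_\beta]$ directly.

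For the \emph{if} direction, assume $\alpha=\beta$ and $[r_\alpha]=[r_\beta]$. Lemma~\ref{lemma:equivalence-equivalencerel-langproperty} applies because $r$ is an equivalence relation, and gives
\[
L_{NE}(\alpha,r_\alpha) = L_{NE}(\alpha,[r_\alpha]) = L_{NE}(\beta,[r_\beta]) = L_{NE}(\beta,r_\beta).
\]
The middle equality holds because both the pattern and the constraint set coincide, so the sets of valid non-erasing substitutions are identical.

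The only delicate point is the scope of $[r_\alpha]$. It should be understood as a closure over the variables occurring in $\alpha$, together with the reflexive pairs. Then a constraint such as $(r,x_i,x_i)$ is satisfied by every substitution, since $r$ is reflexive, and cannot cause a mismatch between the two sides. With this reading both lemmas apply verbatim, and the main work is only checking that their hypotheses are met, which is routine.
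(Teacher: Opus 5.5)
Your proposal is correct and follows the paper's proof exactly: Lemma~\ref{lemma:equivalence-length1-rel-property} gives the ``only if'' direction, and Lemma~\ref{lemma:equivalence-equivalencerel-langproperty} gives the ``if'' direction via the chain $L_{NE}(\alpha,r_\alpha)=L_{NE}(\alpha,[r_\alpha])=L_{NE}(\beta,[r_\beta])=L_{NE}(\beta,r_\beta)$. Your caveat about $|\Sigma|\geq 2$, which the paper inherits silently from Lemma~\ref{lemma:equivalence-length1-rel-property}, is more than a matter of separate care: for $|\Sigma|=1$ the ``only if'' direction is false, e.g.\ $L_{NE}(\ta x_1,\emptyset)=L_{NE}(x_1\ta,\emptyset)$ although $\ta x_1\neq x_1\ta$.
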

\begin{proof}
    We obtain the first direction immediately from \Cref{lemma:equivalence-length1-rel-property}. For the other direction, assume $\alpha = \beta$ and $[r_\alpha] = [r_\beta]$. By \Cref{lemma:equivalence-equivalencerel-langproperty} we know $L_{NE}(\alpha,r_\alpha) = L_{NE}(\alpha,[r_\alpha])$ and $L_{NE}(\beta,r_\beta) = L_{NE}(\beta,[r_\beta])$. As $\alpha = \beta$ and $[r_\alpha] = [r_\beta]$, we have $L_{NE}(\alpha,[r_\alpha]) = L_{NE}(\beta,[r_\beta])$. Thus, $L_{NE}(\alpha,r_\alpha) = L_{NE}(\beta,r_\beta)$.
\end{proof}

For the relations specifically considered in this paper we thus obtain the following.

\begin{corollary}[{\Rightscissors~\cite[A.1]{thisPaper}}]\label{corollary:equivalence-ne-polydec-somerel}
    Let $R = \{r\}$ for some $r\in\{r_=, r_{ab}, r_{com^+}, r_{com^*}\}$. Given the relational patterns $(\alpha,r_\alpha),(\beta,r_\beta)\in\relpat$, we can check in polynomial time whether $L_{NE}(\alpha,r_\alpha) = L_{NE}(\beta,r_\beta)$.
\end{corollary}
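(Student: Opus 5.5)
The plan is to derive the corollary from \Cref{theorem:equivalence-subst1prop-equivrel-equality} and then show that its condition can be tested in polynomial time. The argument has three steps.

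\emph{Step 1: the hypotheses of the theorem.} For each $r\in\{r_=, r_{ab}, r_{com^+}, r_{com^*}\}$ we check that $r$ is an equivalence relation on the relevant words and that it relates a letter only to itself.
\begin{itemize}
\item For $r_=$ both facts are trivial.
\item For $r_{ab}$, abelian equivalence is an equivalence relation. Two single letters are abelian equivalent iff they coincide.
\item For $r_{com^+}$, the key fact is that for nonempty words, $u,v\in\{z\}^+$ for some $z$ iff $uv=vu$, iff $u$ and $v$ have the same primitive root. Reflexivity and symmetry are then clear. For transitivity: if $u,v$ are powers of $z$ and $v,w$ are powers of $z'$, then $v$ has a unique primitive root, which is the root of both $z$ and $z'$. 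Hence $u,w$ are powers of that common root. For letters, $(\ta,\tb)\in r_{com^+}$ forces $\ta,\tb\in\{z\}^+$, so $z=\ta=\tb$.
\item For $r_{com^*}$, the relation is not transitive on all of $\Sigma^*$, since $\varepsilon$ commutes with everything. In the non-erasing setting only nonempty substitutions occur, and on $\Sigma^+$ the relation $r_{com^*}$ coincides with $r_{com^+}$. So $L_{NE}(\alpha,r_\alpha)$ is unchanged if every $r_{com^*}$ constraint is replaced by $r_{com^+}$.
\end{itemize}

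I expect this last point to be the only step needing care. The theorem quantifies over $r$ being an equivalence relation, while here we only have one on $\Sigma^+$. I would handle it by noting that the proofs of \Cref{lemma:equivalence-equivalencerel-langproperty} and \Cref{lemma:equivalence-length1-rel-property} only ever evaluate $r$ on images of non-erasing substitutions. Alternatively, one reduces directly to the $r_{com^+}$ case by the replacement above.

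\emph{Step 2: normal form.} Renaming variables does not change the language, provided the constraints are renamed consistently. So we first bring both inputs into normal form: scan each regular pattern left to right and rename the $k$-th new variable to $x_k$, applying the same renaming to $r_\alpha$ and $r_\beta$. This takes linear time. Constraints may mention variables not occurring in the pattern. Since those variables are not substituted into the word, they can be set arbitrarily; for instance, assigning them a single letter yields a word related to any nonempty word, after choosing a suitable $z$ or abelian class. They could in principle still connect occurring variables transitively, so they must be kept as nodes in the closure computation.

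\emph{Step 3: the polynomial-time test.} By the theorem, it remains to decide whether $\alpha=\beta$ and $[r_\alpha]=[r_\beta]$.
\begin{itemize}
\item Testing $\alpha=\beta$ is a direct comparison of the two patterns.
\item For the closures, build an undirected graph whose vertices are the variables appearing in $\alpha$ or in the constraints, with an edge for each constraint. Then $[r_\alpha]$ is exactly the partition into connected components, computable by BFS or union--find in polynomial time.
\item Finally, compare the two partitions restricted to $\var(\alpha)$.
\end{itemize}
Restricting to $\var(\alpha)$ suffices, because the proof of \Cref{lemma:equivalence-length1-rel-property} only distinguishes constraints among occurring variables. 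Any difference involving only non-occurring variables does not affect the language, as shown in Step 2. The whole procedure therefore runs in polynomial time.
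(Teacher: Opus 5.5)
Your route is the paper's: check the hypotheses of Theorem~\ref{theorem:equivalence-subst1prop-equivrel-equality} for $r_=$, $r_{ab}$ and $r_{com^+}$, identify $r_{com^*}$ with $r_{com^+}$ on $\Sigma^+$, and decide $\alpha=\beta$ and $[r_\alpha]=[r_\beta]$ by a connected-components computation. The paper tacitly assumes that constraints only mention variables of the pattern; its proof of Lemma~\ref{lemma:equivalence-length1-rel-property} relies on this. Under that convention your argument is fine.

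\textbf{Variables outside the pattern.} The error is in Step~2, where you explicitly allow constrained variables outside $\var(\alpha)$ and keep them as connecting nodes. The non-erasing requirement in the definition of $L_{NE}$ only applies to $x\in\var(\alpha)$. So such a variable may be mapped to $\varepsilon$, and $(\varepsilon,u)\in r_{com^*}$ holds for every $u$. Under $r_{com^*}$ these variables therefore never propagate the relation. Your claim that replacing $r_{com^*}$ by $r_{com^+}$ leaves $L_{NE}(\alpha,r_\alpha)$ unchanged is false in this generality.

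For instance, take $\alpha=\beta=x_1x_2$, $r_\alpha=\{(r_{com^*},x_1,y),(r_{com^*},y,x_2)\}$ with $y\notin\var(\alpha)$, and $r_\beta=\{(r_{com^*},x_1,x_2)\}$. Your procedure yields the partition $\{\{x_1,x_2\}\}$ on both sides and answers ``equal''. Yet $\ta\tb\in L_{NE}(\alpha,r_\alpha)\setminus L_{NE}(\beta,r_\beta)$, via $h(y)=\varepsilon$.

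For $r_{com^*}$ you must first discard every constraint involving a non-occurring variable; erasing all such variables satisfies them simultaneously. For $r_=$, $r_{ab}$ and $r_{com^+}$, keeping them as connectors is correct. For $r_{com^+}$ this is because a constrained variable can never be erased.

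Your stated justification is also wrong: a single letter is not related to every nonempty word (under $r_{ab}$ it is related only to itself). What is true is that a non-occurring variable can copy the image of an occurring variable in its component. A component with no occurring variables is satisfied by one common letter.

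\textbf{Unary alphabets.} Theorem~\ref{theorem:equivalence-subst1prop-equivrel-equality} rests on Lemma~\ref{lemma:equivalence-length1-rel-property}, which assumes $|\Sigma|\ge 2$, but the corollary does not restrict $\Sigma$. For $|\Sigma|=1$ the criterion fails: $\ta x_1$ and $x_1\ta$ are distinct normal-form patterns with the same non-erasing language. So the unary case needs its own argument, which is easy.
\begin{itemize}
\item Under $r_{com^+}$ and $r_{com^*}$ all constraints are vacuous. The language then depends only on $|\alpha|$ and on whether $\alpha$ contains a variable.
\item Under $r_=$ and $r_{ab}$, which coincide on unary words, the language is $\{\ta^{|\alpha|+m}\mid m\in M_\alpha\}$. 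Here $M_\alpha$ is the additive monoid generated by the class sizes of the induced partition of $\var(\alpha)$. Equivalence thus reduces to $|\alpha|=|\beta|$ and $M_\alpha=M_\beta$. Since all generators are at most $|\alpha|$, the latter is checkable by dynamic programming in polynomial time.
\end{itemize}
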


\begin{remark}\label{remark:equivalence-nonerasing-commonempty-linear}
    Notice that the relation $r_{\varepsilon_=}$ has no effect in the non-erasing case. In particular, for any relational pattern $(\alpha,r_\alpha)\in\relpat$ with $R = \{r_{\varepsilon_=}\}$ we have $L_{NE}(\alpha,r_\alpha) = L_{NE}(\alpha)$ for the regular pattern $\alpha$. Hence, equivalence between two such patterns is decidable in linear time in this case~\cite{Angluin1980}.
\end{remark}

The other relations considered in \Cref{lemma:equivalence-length1-rel-property} that are not covered in \Cref{corollary:equivalence-ne-polydec-somerel}, namely $r_{ssq}$, $r_{rev}$, $r_{*}$, are clearly not equivalence relations. Hence, decidability is not immediately clear in these cases. Thus, they probably require a deeper investigation to answer decidability in general. We conjecture they are all decidable.

\subsection{The Erasing Case}
In this part, our primary focus lies on the relation $r_{\varepsilon_=}$. 
First, note that $r_{\varepsilon_=}$ is reflexive, symmetric, and transitive. We extend the notation of the prior section by defining equivalence classes $\mathcal{C}_\alpha = \{C_{\alpha,1},\dots, C_{\alpha,k_\alpha}\}$ of variables $C_{\alpha,i} \subseteq \var(\alpha)$, $i \in [k_\alpha]$ that are in relation with one another. 

In the following, we prove that the equivalence problem for erasing relational pattern languages is indeed decidable in this case. First, we present a general decidability procedure. After that, we identify multiple cases that even result in linear-time and polynomial-time algorithms to decide equivalence.

\begin{theorem}[{\Rightscissors~\cite[A.2]{thisPaper}}]\label{theorem:equivalence-commonEmptiness-decidable-pspace}
    Let $R = \{r_{\varepsilon_=}\}$ and $\Sigma$ be some alphabet. Given the relational patterns $(\alpha,r_\alpha),(\beta,r_\beta)\in\relpat$, checking $L_{E}(\alpha,r_\alpha) = L_{E}(\beta,r_\beta)$ is decidable.
\end{theorem}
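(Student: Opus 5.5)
The idea is to show that under $R=\{r_{\varepsilon_=}\}$ both languages are effectively regular, and then use decidability of equivalence for regular languages. Since $r_{\varepsilon_=}$ is an equivalence relation, every $r_\alpha$-valid substitution $h$ is constant on the classes $\mathcal{C}_\alpha=\{C_{\alpha,1},\dots,C_{\alpha,k_\alpha}\}$ in the following sense: within a class, either all variables are mapped to $\varepsilon$ or all are mapped to nonempty words. Here we use that validity only depends on the constraints in $r_\alpha$, and every pair in a class is connected by a chain of constraints.

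Step one is a decomposition of the language. For every subset $S\subseteq[k_\alpha]$, let $\alpha_S$ denote the word obtained from $\alpha$ by deleting all occurrences of variables in $\bigcup_{i\in S}C_{\alpha,i}$. I would show that
\[
L_E(\alpha,r_\alpha)=\bigcup_{S\subseteq[k_\alpha]} L_{NE}(\alpha_S),
\]
where $L_{NE}(\alpha_S)$ is the ordinary non-erasing language of the regular pattern $\alpha_S$. If $\alpha_S$ contains no variables, it is a terminal word, possibly empty, and its language is just $\{\alpha_S\}$.

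For ``$\subseteq$'', take a valid $h$ and let $S$ be the set of classes it erases. Then $h$ restricted to the remaining variables is non-erasing and $h(\alpha)=h(\alpha_S)$. For ``$\supseteq$'', take any non-erasing $g$ for $\alpha_S$ and extend it by $\varepsilon$ on the deleted variables. Every constraint $(r_{\varepsilon_=},x,y)\in r_\alpha$ relates two variables of the same class, so both are erased or both are nonempty, and the extension is $r_\alpha$-valid. Variables of $\alpha$ that occur in no constraint form singleton classes, so they are covered as well.

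Step two is regularity. Each $\alpha_S$ is a regular pattern, i.e.\ each variable occurs at most once. Hence $L_{NE}(\alpha_S)$ is the regular language obtained by replacing each variable by $\Sigma^+$ and each terminal by itself. An NFA for it is easy to build, of size linear in $|\alpha|$ and $|\Sigma|$. Taking the union over all $2^{k_\alpha}$ subsets gives an NFA $A_\alpha$ with $L(A_\alpha)=L_E(\alpha,r_\alpha)$. The same construction gives $A_\beta$ for $(\beta,r_\beta)$.

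Step three: equivalence of the two NFAs is decidable by determinization, or in PSPACE in the size of the automata. This proves the theorem.

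The only step needing care is the ``$\supseteq$'' direction of the decomposition. One must check that validity really means ``closed under the class structure'', using that $r_{\varepsilon_=}$ is reflexive, symmetric and transitive as noted above. One must also check that variables not mentioned in $r_\alpha$, which are unconstrained, are handled correctly as singleton classes. The blow-up to $2^{k_\alpha}$ summands is the main cost. It does not affect decidability, though it explains why the bound is not polynomial here. One could instead build the automaton more compactly by guessing erasure per class with a consistency check, but that is not needed for the statement.
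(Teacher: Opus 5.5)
Your proposal is correct and takes essentially the same approach as the paper. Both split $L_E(\alpha,r_\alpha)$ according to which $r_{\varepsilon_=}$-classes are erased, observe that each piece equals $L_{NE}(\alpha_S)$ for a regular sub-pattern $\alpha_S$ and is therefore recognized by an NFA, take the union, and decide equivalence of the resulting NFAs; the only difference is cosmetic, since you index $S$ by the erased classes while the paper indexes by the non-erased ones.
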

\begin{proof}[Proof Sketch.]
First, consider $(\alpha,r_\alpha)$ and for each subset $S \subseteq \mathcal{C}_\alpha$, replace the variables $v \in \bigcup_{C_{\alpha,i} \in S} C_{\alpha,i}$ by any word in $\Sigma^+$ and any other variable $v \notin \bigcup_{C_{\alpha,i} \in S} C_{\alpha,i}$ by $\varepsilon$. We call the resulting language $L_{E}(\alpha,r_\alpha,S)$. Using that, we obtain $L_{E}(\alpha,r_\alpha) = \bigcup_{S \subseteq \mathcal{C}_\alpha} L_{E}(\alpha,r_\alpha,S)$. 
Do the same for $\beta$. Note there exist at most $2^{k_\alpha} \leq 2^{|\alpha|}$ subsets $S$ for $\alpha$ and at most $2^{k_\beta}\leq 2^{|\beta|}$ subsets for $\beta$. For each $S \subseteq \mathcal{C}_\alpha$, we can take a sub-pattern $\alpha_S$ of $\alpha$ that only contains the variables that are not substituted by the empty word. As $\alpha_S$ is a regular pattern, it can be observed that $L_E(\alpha,r_\alpha,S) = L_{NE}(\alpha_S)$ and we can construct an NFA $A_{\alpha_S}$ with $L(A_{\alpha_S}) = L_{NE}(\alpha_S) = L_{E}(\alpha,r_\alpha,S)$. Doing that for each such $S$, combining all these NFAs into one big NFA $A_\alpha$ for $L_E(\alpha,r_\alpha)$ and doing the same for $(\beta,r_\beta)$ in an NFA $A_{\beta}$, we can compare the languages $L_{E}(\alpha,r_\alpha)$ and $L_{E}(\beta,r_\beta)$ by comparing $L(A_\alpha) = L(A_\beta)$, concluding the result.
\end{proof}

Next, we introduce approaches with an improved running time for some pattern structures.

\begin{lemma}[{\Rightscissors~\cite[A.3]{thisPaper}}]\label{lemma:empty-together-poly-equivalence}
    Let $R = \{r_{\varepsilon_=}\}$ and $\Sigma$ be some alphabet. Let $(\alpha,r_\alpha),(\beta,r_\beta)\in\relpat$ be some relational patterns.
    \begin{enumerate}
        \item If $\alpha$ and $\beta$ are \emph{terminal-free}, we can check in linear time whether $L_{E}(\alpha,r_\alpha) = L_{E}(\beta,r_\beta)$.
        \item If $|\Sigma| = 1$, we can check in linear time whether $L_{E}(\alpha,r_\alpha) = L_{E}(\beta,r_\beta)$.
        \item If there exists $c \in \Sigma$ that is not a terminal in $\alpha$ and $\beta$, we can check in polynomial time whether $L_{E}(\alpha,r_\alpha) = L_{E}(\beta,r_\beta)$.
    \end{enumerate}
\end{lemma}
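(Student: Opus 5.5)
The plan is to start from the decomposition used in the proof sketch of \Cref{theorem:equivalence-commonEmptiness-decidable-pspace}:
\[
L_E(\alpha,r_\alpha)=\bigcup_{S\subseteq\mathcal{C}_\alpha}L_{NE}(\alpha_S).
\]
Here $\alpha_S$ is the regular pattern obtained from $\alpha$ by deleting every variable whose class is not in $S$. We then exploit the very simple shape of $L_{NE}$ of a regular pattern in each of the three situations. The classes $\mathcal{C}_\alpha,\mathcal{C}_\beta$ are computed once, in linear time, as the connected components of the graph on $\var(\alpha)$ whose edges are the constraints in $r_\alpha$. Linearity assumes $r_\alpha$ is given as a list whose size counts towards the input.

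\emph{Terminal-free case.} If $\alpha$ is terminal-free, then $L_{NE}(\alpha_S)=\Sigma^{\geq m_S}$, where $m_S=\sum_{C\in S}|C|$, and $L_{NE}(\alpha_\emptyset)=\{\varepsilon\}$. The union over all $S$ is therefore $\{\varepsilon\}\cup\Sigma^{\geq \mu_\alpha}$, with $\mu_\alpha:=\min_{C\in\mathcal{C}_\alpha}|C|$; the minimum is attained by singleton sets $S$, since larger $S$ only give subsets. We then compare $\mu_\alpha$ and $\mu_\beta$. The languages are equal iff $\mu_\alpha=\mu_\beta$, because the shortest nonempty words have lengths $\mu_\alpha$ and $\mu_\beta$. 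Computing the class sizes and their minimum is linear.

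\emph{Unary case.} For $\Sigma=\{\ta\}$ let $t_\alpha$ be the number of terminals in $\alpha$. Then $L_{NE}(\alpha_S)=\{\ta^{t_\alpha+m_S+k}\mid k\geq 0\}$. By the same argument the whole language is $\{\ta^{t_\alpha}\}\cup\{\ta^n\mid n\geq t_\alpha+\mu_\alpha\}$; the second part is absent if $\alpha$ has no variables. Equivalence holds iff $t_\alpha=t_\beta$ and $\mu_\alpha=\mu_\beta$, with the degenerate variable-free case handled separately. All of this is linear.

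\emph{Fresh letter $\tc$.} Here we use the standard ``canonical witness'' argument. For $S\subseteq\mathcal{C}_\alpha$ let $w_S$ be the word obtained from $\alpha_S$ by replacing every variable with $\tc$.

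The first step is to show that $L_{NE}(\alpha_S)\subseteq L_E(\beta,r_\beta)$ iff $w_S\in L_E(\beta,r_\beta)$. Assume $h'$ is $r_\beta$-valid with $h'(\beta)=w_S$. Since $\tc$ does not occur in $\beta$, every occurrence of $\tc$ lies inside the image of some variable of $\beta$. For an arbitrary non-erasing $g$ on $\alpha_S$, compose $h'$ with the morphism that sends the occurrence of $\tc$ belonging to $x$ to $g(x)$. Because each variable occurs once, this is well defined per occurrence, and it keeps images nonempty iff they were nonempty. Hence validity under $r_{\varepsilon_=}$ is preserved, and $g(\alpha_S)\in L_E(\beta,r_\beta)$.

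The difficulty is that there are exponentially many $S$. The second step, which I expect to be the main obstacle, is to avoid enumerating them. I would first try to show that it suffices to test $S=\emptyset$ and the singletons $S=\{C\}$, together with $S=\mathcal{C}_\alpha$. As a first consequence, $S=\emptyset$ forces the terminal skeletons to agree: the unique shortest words are the terminal projections of $\alpha$ and $\beta$, so these must be equal. Given equal skeletons, I would argue that $\beta$ must be able to place nonempty variables around the $\tc$-blocks of each single class of $\alpha$. Unions of such placements should be combinable, since enlarging a set of nonempty classes of $\beta$ remains consistent with $r_{\varepsilon_=}$, the combined set of classes being again a union of classes. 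The fallback, if that reduction fails, is to replace enumeration by a single dynamic program that matches $\beta$'s terminal blocks against $\alpha$'s terminal blocks while tracking only which classes are forced to be nonempty.

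Each individual membership test $w\in L_E(\beta,r_\beta)$ would be done in polynomial time as follows. The $\tc$'s and terminals of $w$ must be aligned with the fixed terminals of $\beta$, so the set of variables that necessarily receive a $\tc$ is determined by the alignment. We then close this set under the classes and check that the closure is consistent with the alignment. Doing this for both inclusions gives the polynomial bound.
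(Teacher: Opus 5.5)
\textbf{Items 1 and 2, and the structure of item 3.} Your terminal-free and unary cases are correct and are exactly the paper's argument: the minimum class size, plus the number of terminals in the unary case. One small fix: compute the classes over all variables occurring in $r_\alpha$, since a variable outside $\var(\alpha)$ can still link two pattern variables.

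For item~3, your witness lemma is correct. Your hoped-for reduction to $S=\emptyset$ and singletons also holds, but not for the reason you give. Once the skeletons agree, the terminals of $\beta$ are forced onto the non-$\tc$ positions of $w_S$, so $\beta$'s variables receive only $\tc$'s. Write $n_S(g)$ for the number of variables of classes in $S$ lying in gap $g$ of the skeleton. Then $w_S\in L_E(\beta,r_\beta)$ iff some set $S'$ of classes of $\beta$ satisfies $1\le n_{S'}(g)\le n_S(g)$ whenever $n_S(g)>0$, and $n_{S'}(g)=0$ otherwise. Witnesses combine because $n_{S'_1\cup S'_2}\le n_{S'_1}+n_{S'_2}$ and $n_{S_1\cup S_2}=n_{S_1}+n_{S_2}$. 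Merely enlarging the set of nonempty classes is not enough, since it can overshoot $n_S(g)$ or hit a gap with $n_S(g)=0$.

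\textbf{The gap: the per-word membership test.} The alignment fixes how many $\tc$'s each gap of $\beta$ receives, not which variables in that gap are nonempty. So there is no set of variables that ``necessarily'' receives a $\tc$ to close under classes. Instead you must choose classes of $\beta$ hitting every gap of the support between $1$ and $n_S(g)$ times, which for $n_S\equiv 1$ is Exact Cover. Your dynamic-programming fallback fails for the same reason: which classes are switched on is a global choice.

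\textbf{This cannot be repaired.} Given an Exact Cover instance $F_1,\dots,F_k\subseteq[m]$, let $\Sigma=\{\ta,\tc\}$ and use the skeleton $\ta^{m-1}$, with gaps $1,\dots,m$. Let $\beta$ have, in gap $g$, one variable $z_{j,g}$ for each $j$ with $g\in F_j$, relating all $z_{j,\cdot}$ for fixed $j$. Let $\alpha$ be $\beta$ plus one variable $y_g$ per gap, all $y_g$ related. Then:
\begin{itemize}
\item $L_E(\beta,r_\beta)\subseteq L_E(\alpha,r_\alpha)$ trivially.
\item $\tc(\ta\tc)^{m-1}\in L_E(\alpha,r_\alpha)$, and it lies in $L_E(\beta,r_\beta)$ only if the nonempty $z$-classes form an exact cover.
\item An exact cover $E$ lets $\beta$ simulate every substitution of $\alpha$ with nonempty $y_g$ by additionally switching on $E$. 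The letters of $y_g$ feed the at most one extra nonempty variable per gap.
\end{itemize}
Hence equivalence under the hypothesis of item~3 is NP-hard; your singleton reduction even puts it in NP. So no polynomial algorithm exists unless $\mathrm{P}=\mathrm{NP}$.

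\textbf{The paper's sketch has the same obstruction.} The paper uses occurrence vectors, deletes dominated classes, and compares the resulting minimal patterns. If ``dominated'' means dominated by a set of classes, deciding it is this exact-cover question. If it means dominated by a single class with the same support, the minimal pattern is not canonical. For example, $\alpha=y_1p\,\ta\,y_2q$ with $(r_{\varepsilon_=},y_1,y_2)\in r_\alpha$ and $\beta=p\,\ta\,q$ with $r_\beta=\emptyset$ have equal languages, yet the class $\{y_1,y_2\}$ is dominated by neither $\{p\}$ nor $\{q\}$. So neither your argument nor the paper's sketch supports the polynomial bound claimed in item~3.
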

\begin{proof}[Proof Sketch.]
The key observation is that every non-empty equivalence class contributes at least one symbol, while otherwise variables are unrestricted. Hence, in 1. the accepted languages of $\alpha$ and $\beta$ are completely determined by the minimum size of an equivalence class (and, in 2., additionally by the number of terminals). These quantities can be computed by a single scan over the patterns.
In 3., the patterns are decomposed into terminal and variable blocks, and each equivalence class is represented by its occurrence vector across the variable blocks. By removing dominated classes, we obtain a minimal pattern in polynomial time. The two languages are equivalent if and only if these minimal patterns coincide.
\end{proof}

Finally, we consider some other relations in the context of the equivalence problem for erasing relational pattern languages. We can show that this problem must be at least as hard as in the case of $r_=$ for the relations $r_{*}$, $r_{ssq}$, and $r_{rev}$, in terms of time complexity. While the first two follow trivially by simulating $r_=$, the last one requires a more technical construction.

\begin{proposition}\label{proposition:equivalence-equalityhardrelations}
    Let $R = \{r\}$, $r\in\{r_{ssq},r_*,r_{rev}\}$ and $(\alpha,r_\alpha),(\beta,r_\beta)\in\relpat$. The  equivalence problem for erasing pattern languages polynomial-time many-one reduces to the equivalence problem for erasing relational pattern languages over $R$.
\end{proposition}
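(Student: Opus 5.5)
The plan is to translate a classical pattern $\alpha\in Pat_\Sigma$, which may contain repeated variables, into a regular pattern $\alpha'$ with constraints $r_{\alpha'}$. We replace the $j$-th occurrence of each variable $x$ by a fresh variable $x^{(j)}$ and then impose constraints that force all $x^{(j)}$ to be substituted identically. This is done uniformly for $\alpha$ and $\beta$. The translation is clearly computable in polynomial time, and if it satisfies $L_E(\alpha',r_{\alpha'})=L_E(\alpha)$, then $L_E(\alpha)=L_E(\beta)$ iff $L_E(\alpha',r_{\alpha'})=L_E(\beta',r_{\beta'})$. The work is therefore in choosing constraints that simulate $r_=$.

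For $r_{ssq}$ we add both $(r_{ssq},x^{(j)},x^{(j+1)})$ and $(r_{ssq},x^{(j+1)},x^{(j)})$ for every $j$. If $u\in\SubSeq(v)$ and $v\in\SubSeq(u)$, then $|u|=|v|$, and a subsequence of the same length is the word itself, so $u=v$. Conversely, equal words satisfy both constraints, so the valid substitutions are exactly those assigning one common word to all occurrences. For $r_*$ we proceed in the same way with $(r_*,x^{(j)},x^{(j+1)})$ and $(r_*,x^{(j+1)},x^{(j)})$. Suppose $u\in\{v\}^*$ and $v\in\{u\}^*$. If $v=\varepsilon$, then $u=\varepsilon$ (and symmetrically). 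Otherwise $u=v^k$ and $v=u^m$ with $k,m\geq 1$, and comparing lengths gives $k=m=1$. In both cases the languages coincide with $L_E(\alpha)$, which settles these two relations directly.

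For $r_{rev}$ a symmetric pair of constraints is useless, since $(u,v)\in r_{rev}$ and $(v,u)\in r_{rev}$ only says $v=u^R$. Equality arises from composing two reversals, $u=z^R$ and $z=v^R$, and this requires an intermediate variable $z$ that must itself occur in the regular pattern and hence contributes to the word. My first attempt is to make the reversed copies part of the encoding itself. Each occurrence of $x$ is replaced by a block $x^{(j)}\bar x^{(j)}$ with $(r_{rev},x^{(j)},\bar x^{(j)})$, so the block is $ww^R$. We then chain $(r_{rev},\bar x^{(j)},x^{(j+1)})$, which forces all $x^{(j)}$ to be equal. Each terminal is encoded compatibly, for instance $\ta\mapsto\ta\ta$. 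We then aim to show $L_E(\alpha)=L_E(\beta)$ iff the encoded languages agree, by relating both sides through the transformation $w\mapsto ww^R$ applied blockwise. If this does not give a faithful correspondence, the fallback is to keep $\alpha$ intact and append, after a separating structure, one auxiliary block per variable carrying $h(x)^R$. The chained constraints through this block then enforce equality, and we would argue that the suffix is recoverable from the prefix.

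I expect the $r_{rev}$ case to be the main obstacle. With the palindromic encoding, words of the form $ww^R$ do not concatenate as nicely as images of a morphism: $\ta\tb\mapsto\ta\tb\tb\ta$, whereas letterwise doubling gives $\ta\ta\tb\tb$. Showing that equivalence is neither created nor destroyed will need a careful decoding argument. The appended-suffix approach instead has to handle the fact that equivalent E-patterns may use differently named or differently many variables, as with $xy$ versus $yx$. The suffix must therefore be made independent of variable naming, for example by ordering the auxiliary blocks according to the normal form and padding with unconstrained blocks, so that equivalence of $\alpha,\beta$ transfers.
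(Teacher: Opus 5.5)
Your treatment of $r_{ssq}$ and $r_*$ is correct and matches the paper: each equality constraint is replaced by the two opposite constraints. The $r_{rev}$ case has a genuine gap, because neither of your two constructions preserves equivalence.

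The blockwise encoding already fails for $\alpha=x$ and $\beta=xy$, which both generate $\Sigma^*$. Their encodings generate $\{ww^R\mid w\in\Sigma^*\}$ and $\{uu^Rvv^R\mid u,v\in\Sigma^*\}$, and $\ta\ta\tb\tb$ lies only in the second.

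The fallback fails for a reason unrelated to variable naming. A suffix made of one block $h(x)^R$ per variable is determined by the substitution, not by the word $h(\alpha)$. So your claim that the suffix is recoverable from the prefix is false. For $\alpha=xyx$, the word $\ta\ta\ta$ arises from $h(x)=h(y)=\ta$ and also from $h(x)=\varepsilon$, $h(y)=\ta\ta\ta$, giving suffixes of lengths $2$ and $3$. As a result, the equivalent patterns $x$ and $xyx$ are mapped to inequivalent relational patterns. With a separator $s$, every word of the first has length $\equiv |s| \pmod 2$, while $h(x)=\ta$, $h(y)=\varepsilon$ yields a word of length $|s|+3$ for the second. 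Ordering the auxiliary blocks by a normal form does not touch this defect, and you give no argument that padding does.

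The missing idea is to append the mirror image of the \emph{entire} pattern rather than one block per variable. Then the appended part is a function of $h(\alpha)$ alone. The paper does this as follows:
\begin{itemize}
\item It sets $\alpha^{\leftrightarrow}:=\alpha\alpha^{\leftarrow}$, where $\alpha^{\leftarrow}$ is $\alpha$ read backwards (all terminals and all occurrences) with every $x_i$ renamed to a fresh $x_i'$.
\item It imposes $(r_{rev},x_i,x_i')$ for all $i$. This gives $h(x_i')=h(x_i)^R$, hence $h(\alpha^{\leftarrow})=h(\alpha)^R$.
\item For every $(r_=,x_i,x_j)$ it adds $(r_{rev},x_i,x_j')$ and $(r_{rev},x_j,x_i')$. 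This gives $h(x_i)=h(x_j')^R=h(x_j)$.
\end{itemize}
The primed variables are exactly the intermediate variables through which you wanted to compose two reversals. Their contribution to the word is harmless because it is precisely $h(\alpha)^R$.

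Conversely, every valid substitution for $\alpha$ extends by setting $h(x_i'):=h(x_i)^R$. Hence $L_E(\alpha^{\leftrightarrow},r_{\alpha^{\leftrightarrow}})=\{ww^R\mid w\in L_E(\alpha,r_\alpha)\}$. Since $w\mapsto ww^R$ is injective, the two original languages are equal if and only if their images are. No separator and no normal-form bookkeeping is needed.
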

\begin{proof}
We reduce from the case $R={r_=}$. Let $(\alpha,r_\alpha)$ be a relational pattern over equality. For $r\in\{r_*,r_{ssq}\}$, replace each constraint $(r_=,x_i,x_j)\in r_\alpha$ by $(r,x_i,x_j)$ and $(r,x_j,x_i)$. This preserves valid substitutions, since equality is equivalent both to being mutual subsequences and to being mutual powers. Applying this change to both input patterns, respectively, gives the reductions for $r_*$ and $r_{ssq}$.

For $r=r_{rev}$, introduce for every $x_i\in\var(\alpha)$ a fresh variable $x_i'$. Let $\alpha^{\leftarrow}$ be the reversal of $\alpha$ with each $x_i$ replaced by $x_i'$, and set $\alpha^{\leftrightarrow}:=\alpha\alpha^{\leftarrow}$. Define $r_{\alpha^{\leftrightarrow}}$ by adding $(r_{rev},x_i,x_i')$ for all $x_i\in\var(\alpha)$ and $(r_{rev},x_i,x_j'),(r_{rev},x_j,x_i')$ for every $(r_=,x_i,x_j)\in r_\alpha$. Then
\[ L_E(\alpha^{\leftrightarrow},r_{\alpha^{\leftrightarrow}}) = \{ww^R\mid w\in L_E(\alpha,r_\alpha)\}. \]
Indeed, the constraints $(r_{rev},x_i,x_i')$ force $x_i'$ to be substituted by the reversal of $x_i$, while each constraint $(r_{rev},x_i,x_j')$ enforces $h(x_i)=h(x_j)$ using $h(x_j')=h(x_j)^R$. Conversely, every $r_\alpha$-valid substitution can be extended by setting $h(x_i'):=h(x_i)^R$. Thus, the constructed language is exactly the image of $L_E(\alpha,r_\alpha)$ under the injective map $f(w)=ww^R$. Applying the construction to both input patterns therefore preserves language equality. The construction is polynomial-time computable.
\end{proof}

This concludes our examination of the equivalence problem. For the non-erasing cases, we either obtain efficient decidability results or structural restrictions that may support future decidability proofs. For the erasing cases, however, most questions remain open, indicating that even for relatively weak structure-enforcing relations, equivalence is far from trivial.

% === INCLUSION === %
\section{The Inclusion Problem}
In this section, we continue with the inclusion problem for relational pattern languages. We recall that the inclusion problem for non-erasing as well as erasing pattern languages over $r_=$ is generally undecidable for all alphabets $\Sigma$ of size $|\Sigma| \geq 2$, as shown and refined in ~\cite{Jiang1995,Freydenberger2010,BREMER201215, Bremer2025Melting}. As we will see in this section, for several of the considered relations in $\mathcal{R}$, we also obtain that this problem is undecidable in the non-erasing or erasing cases. In addition to that, we also identify decidable cases.

First, consider the relations $r_{ssq}$ and $r_*$. As we observed before, we can simulate the exact behavior of $r_=$ (e.g., see notes in the beginning of the proof of Proposition~\ref{proposition:equivalence-equalityhardrelations}). 
By the fact that the inclusion problem is undecidable for $r_=$ in both the erasing and non-erasing cases, we obtain the following.

\begin{corollary}\label{thm:eqrels}\cite{Jiang1995,Freydenberger2010,BREMER201215}
    Let $\Sigma$ be any finite alphabet with $|\Sigma|\geq 2$. Let $r\in\{r_=,r_{ssq},r_*\}$ and let $R = \{r\}$. Given $(\alpha,r_\alpha),(\beta,r_\beta)\in\relpat$, it is undecidable to answer whether $L_\mathcal{X}(\alpha,r_\alpha)\subseteq L_\mathcal{X}(\beta,r_\beta)$, for $\mathcal{X}\in\{E,NE\}$.
\end{corollary}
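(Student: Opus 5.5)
The plan is a reduction from the classical inclusion problem for (erasing and non-erasing) pattern languages over $r_=$. For $r=r_=$ this is exactly that problem, since relational patterns over $\{r_=\}$ and ordinary patterns express the same languages. For $r\in\{r_{ssq},r_*\}$ we translate constraints as in the proof of Proposition~\ref{proposition:equivalence-equalityhardrelations}. Undecidability then transfers from \cite{Jiang1995,Freydenberger2010,BREMER201215}, which cover every fixed alphabet with $|\Sigma|\geq 2$ in both the E and NE settings.

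\textbf{Step 1: from patterns to relational patterns over $r_=$.} Let $\gamma\in Pat_\Sigma$ be an arbitrary pattern. For every variable $x$ occurring $m\geq 2$ times, rename its occurrences to fresh variables $x^{(1)},\dots,x^{(m)}$, which yields a regular pattern $\gamma'$. Then set $r_{\gamma'}:=\{(r_=,x^{(k)},x^{(k+1)})\mid k\in[m-1]\}$ over all such $x$. A substitution $h$ of $\gamma'$ is $r_{\gamma'}$-valid iff all copies of each $x$ receive the same word. Such substitutions correspond bijectively to substitutions of $\gamma$, and non-erasingness is preserved in both directions. Hence $L_\mathcal{X}(\gamma)=L_\mathcal{X}(\gamma',r_{\gamma'})$ for $\mathcal{X}\in\{E,NE\}$, and the construction takes linear time. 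Applied to both patterns of an inclusion instance, it gives a many-one reduction from classical inclusion to inclusion over $R=\{r_=\}$.

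\textbf{Step 2: from $r_=$ to $r_{ssq}$ and $r_*$.} Replace each $(r_=,x_i,x_j)$ by the pair $(r,x_i,x_j)$ and $(r,x_j,x_i)$. We must check that, for all words $u,v$, we have $u=v$ iff both $(u,v)\in r$ and $(v,u)\in r$.

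For $r_{ssq}$: if $u$ is a subsequence of $v$ and $v$ is a subsequence of $u$, then $|u|=|v|$. A subsequence of equal length must be the word itself, so $u=v$.

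For $r_*$: suppose $u\in\{v\}^*$ and $v\in\{u\}^*$. If one of the words is empty, then the other is a power of $\varepsilon$, so it is $\varepsilon$ as well. Otherwise $u=v^k$ and $v=u^\ell$ with $k,\ell\geq1$. This gives $|u|=k\ell|u|$, so $k=\ell=1$ and $u=v$. The converse direction is immediate in both cases, since $u\in\SubSeq(u)$ and $u\in\{u\}^*$.

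Therefore the set of valid substitutions is unchanged, the languages coincide in both the E and NE cases, and inclusion is preserved.

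\textbf{Main obstacle.} The only point needing care is that the cited undecidability results concern classical patterns over fixed alphabets of size at least $2$, not relational ones. The renaming in Step 1 is what bridges this gap, and we must make sure it does not alter the alphabet or the erasing/non-erasing mode. Beyond that, the argument only needs the elementary antisymmetry facts verified in Step 2.
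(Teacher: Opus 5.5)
Your proposal is correct and follows the paper's argument. The paper likewise obtains the $r_{ssq}$ and $r_*$ cases by replacing each $(r_=,x_i,x_j)$ with $(r,x_i,x_j)$ and $(r,x_j,x_i)$, as in Proposition~\ref{proposition:equivalence-equalityhardrelations}, and then invokes the known undecidability for $r_=$; your Step~1 (regularizing a classical pattern via fresh copies linked by $r_=$) and your explicit antisymmetry checks just spell out what the paper leaves implicit.
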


Now, we continue with the analysis for relations for which we cannot immediately infer the undecidability from such trivial relationships.

\subsection{The Non-Erasing Cases}
In the non-erasing case, first notice that, similar to the equivalence problem, in the case of $r_{\varepsilon_=}$, we compare two regular patterns (as $\varepsilon$ substitutions are excluded anyways). Here, inclusion is known to be decidable in polynomial time~\cite{Shinohara1982,Terada2005}. For the relations $r_{ab}$ and $r_{rev}$ we observe something interesting. Reconstructing the reduction from \cite{BREMER201215} in the setting of relational pattern languages, the inclusion problem 
for NE relational pattern languages under $r_{ab}$ or $r_{rev}$ is indeed also undecidable, even though $r_{ab}$ generally implies a significantly less restrictive behavior than $r_=$, and $r_{rev}$ implies a generally distinct structure.

\begin{theorem}[{\Rightscissors~\cite[B.1]{thisPaper}}]\label{prop:inclusion_non-erasing_abel-undec}
    Let $R := \{r\}$ for $r\in\{r_{ab},r_{rev}\}$ and $|\Sigma| \geq 2$ for some alphabet $\Sigma$. For relational patterns $(\alpha,r_\alpha),(\beta,r_\beta)\in\relpat$, it is undecidable whether $L_{NE}(\alpha,r_\alpha) \subseteq L_{NE}(\beta,r_{\beta})$.
\end{theorem}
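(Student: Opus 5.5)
I would establish the result by a direct simulation of equality constraints inside the non-erasing inclusion problem for classical pattern languages. That problem is undecidable for every fixed alphabet with $|\Sigma|\geq 2$ by Bremer and Freydenberger~\cite{BREMER201215}. The aim is a computable map $(\alpha, r_\alpha)\mapsto(\alpha', r'_{\alpha'})$, where $(\alpha,r_\alpha)$ uses $r_=$, such that inclusion is preserved. The language-level trick of Proposition~\ref{proposition:equivalence-equalityhardrelations} does not suffice here. For $r_{ab}$ there is no way to force equality by abelian constraints alone, and for $r_{rev}$ the doubling $w\mapsto ww^R$ works fine. So I would treat the two relations differently.

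\textbf{Case $r_{rev}$.} I would reuse the construction from the proof of Proposition~\ref{proposition:equivalence-equalityhardrelations}. Form $\alpha^{\leftrightarrow}=\alpha\alpha^{\leftarrow}$ with the primed copies and the constraints $(r_{rev},x_i,x_i')$ and $(r_{rev},x_i,x_j')$, $(r_{rev},x_j,x_i')$ for each $(r_=,x_i,x_j)\in r_\alpha$. Non-erasingness is preserved, since $h(x_i')=h(x_i)^R$ is nonempty exactly when $h(x_i)$ is. Hence $L_{NE}(\alpha^{\leftrightarrow},\cdot)=\{ww^R\mid w\in L_{NE}(\alpha,r_\alpha)\}$. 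Because $f(w)=ww^R$ is injective, $L_{NE}(\alpha)\subseteq L_{NE}(\beta)$ holds iff $f(L_{NE}(\alpha))\subseteq f(L_{NE}(\beta))$. The reduction is immediate and works over any alphabet of size at least two.

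\textbf{Case $r_{ab}$.} Here I would not reduce generically but retrace the Bremer--Freydenberger construction itself. That proof encodes non-halting computations of a (nondeterministic two-counter or Turing) machine. $\alpha$ generates all candidate words of a fixed syntactic shape, and $\beta$ is a large disjunction gadget: a word lies in $L(\beta)$ iff some "error" predicate fires. Each predicate is realized by matching a repeated variable against two positions of the encoded computation.

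The key steps are as follows. First, I would check that $\alpha$ in their construction can be taken so that its repeated variables only enforce structure that abelian equivalence also enforces. For example, a variable over $\{\ta\}^+$ blocks, or one whose image is pinned by the surrounding terminals, satisfies abelian equivalent $\Leftrightarrow$ equal; where necessary, I would add unary-style gadgets. Second, and this is the main obstacle, I must show that in $\beta$, relaxing every equality constraint to abelian equivalence does not add words from $L(\alpha)$ that encode valid computations. The encoding uses counters written in unary blocks like $\ta^n$ separated by $\tb$, and errors are detected by comparing lengths or contents of such blocks. So I would argue that every variable occurrence forced by terminals into a unary region has abelian-equivalent images equal. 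For the remaining "padding" variables, the choice is free anyway, so the relaxation only adds words that were already obtainable. If the original construction compares non-unary segments, I would first re-encode configurations in unary counter form, which the undecidability of two-counter machines permits.

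Finally, I would verify that the whole reduction stays at the binary alphabet and is computable, yielding undecidability for both $r_{ab}$ and $r_{rev}$.
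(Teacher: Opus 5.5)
For $r_{rev}$ your route differs from the paper's and is correct. For $r_{ab}$ you follow the paper's route, and like the paper you assert rather than verify the crucial step.

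The paper treats both relations uniformly. It re-runs the Bremer--Freydenberger construction and argues that every equality test there involves only single letters or unary words, where $r_{rev}$ and $r_{ab}$ coincide with $r_=$. For $r_{rev}$ you instead lift the doubling construction of Proposition~\ref{proposition:equivalence-equalityhardrelations} to inclusion. Since $h(x_i')=h(x_i)^R$ is non-empty iff $h(x_i)$ is, you get $L_{NE}(\alpha^{\leftrightarrow},r_{\alpha^{\leftrightarrow}})=f(L_{NE}(\alpha,r_\alpha))$ with $f(w)=ww^R$. Injectivity of $f$ then gives $A\subseteq B\iff f(A)\subseteq f(B)$. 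This is a complete many-one reduction from the $r_=$ case. It needs nothing about the internals of~\cite{BREMER201215}, keeps the alphabet, and also yields the $r_{rev}$ half of Theorem~\ref{thm:uirev} for free. It is shorter and more robust than the paper's argument. The paper's route buys one construction that also covers $r_{ab}$, for which no analogous black-box reduction is apparent.

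For $r_{ab}$, two points need sharpening.

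First, the single-letter pinning does not come from terminals surrounding a variable; it comes from the selection gadget in $\beta$. The variables $a_2,b_2$ are related to $a_1,b_1$, which match $0^{\mu+1}$ and are therefore unary. Abelian equivalence then forces $a_2=a_1$ and $b_2=b_1$. The remaining factor $0^{\mu-k}\#0^{k-1}$ has length $\mu$ and is matched by $\mu$ non-erasing variables $x_{k,1}$. So each $x_{k,1}$ is a single letter, and exactly one of them is $\#$. The constraints $(r_{ab},x_{i,j},x_{i,j'})$ then propagate these letters.

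Second, your remark about padding variables does no work, because unconstrained variables are unaffected by the relaxation. The real obligation concerns every $r_{ab}$-valid matching of a word encoding an accepting computation, including unintended matchings. In each such matching, every related pair inside the selected $\hat{\beta}_i$ must be single-letter or unary. This has to be checked gadget by gadget; the paper also defers that verification to its full version.

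Your first step on $\alpha$ is vacuous, since $\alpha$ can be taken with $r_\alpha=\emptyset$. According to the paper, the comparisons in the construction are already on single letters or unary counter encodings, so your fallback of re-encoding into counter form should be unnecessary.
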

\begin{proof}[Proof Sketch.]
    We adapt the construction of Bremer and Freydenberger~\cite{BREMER201215} for the undecidability of inclusion for non-erasing pattern languages over a binary alphabet. The key observation is that, on the words used in their construction, the relations $r_{ab}$ and $r_{rev}$ can simulate equality whenever needed. If two related variables are forced to be substituted by single letters, then both relations force these letters to be equal. And, if one of two related variables is substituted by a unary word, then the other one must be substituted by the same unary word as well. Thus, all equality tests used in the original construction can be reproduced by $r_{ab}$ and $r_{rev}$ on the relevant parts of the pattern.

    We first consider a binary alphabet $\{0,\#\}$ and write $\bar{r}\in\{r_{ab},r_{rev}\}$. Let $U$ be the universal Turing machines used in~\cite{BREMER201215}. We construct relational patterns $(\alpha,r_\alpha)$ and $(\beta,r_{\beta})$ such that
    \[ L_{NE}(\alpha,r_\alpha) \subseteq L_{NE}(\beta,r_\beta) \]
    if and only if $U$ has no accepting computation. Hence, the undecidable emptiness problem for universal Turing machines reduces to the inclusion problem for non-erasing relational pattern languages under $\bar{r}$.

    The pattern $\beta$ is the relational-pattern analogue of the pattern from ~\cite{BREMER201215}. It has the form
    \[ \beta := a_1b_1\ \#^5\ a_2x_{1,1}x_{2,1}\cdots x_{\mu,1}b_2\ \#^5\ r_1\hat{\beta}_1r_2\hat{\beta}_2\cdots r_\mu\hat{\beta}_\mu r_{\mu+1} \]
    where, $a_j,b_j,r_k,x_{\ell,1}$ are variables, for $j\in[2]$, $k\in[\mu+1]$, $\ell\in[\mu]$, and for each $i\in[\mu]$,
    \[ \hat{\beta}_i := 0x_{i,2}\cdots x_{i,5}0\ \upgamma_i\ 0x_{i,6}\cdots x_{i,9}0\ \updelta_i\ 0x_{i,10}\cdots x_{i,13}0. \]
    for variables $x_{i,j}$, $j\in[2,13]$, and patterns $\upgamma_i,\updelta_i$. For now, the relevant constraints in $r_{\beta}$ are
    \[ (\bar r,a_1,a_2),\quad (\bar r,b_1,b_2),\quad (\bar r,x_{i,j},x_{i,j'}) \quad\text{for all } i\in[\mu]\text{ and }j,j'\in[13]. \]
    The variables $r_1,\ldots,r_{\mu+1}$ are unconstrained. The patterns $\upgamma_i$ and $\updelta_i$ are the components from ~\cite{BREMER201215} encoding the possible reasons why a string is not an accepting computation encoding. In particular, $\upgamma_i$ and $\updelta_i$ contain no factor $\#^4$. Moreover, for each $i\in[\mu]$, all variables occurring in $\upgamma_i\updelta_i$ are not related to any other variable occurring in some other  $\upgamma_j\updelta_j$ for $i\neq j$ and $i,j\in[\mu]$, nor to any variable $x_{k,\ell}$, for $k\in[\mu]$ and $\ell\in[13]$, as well as the variables $a_j$ and $b_j$.

    The pattern $\alpha$ is defined by
    \[ \alpha := 0^{\mu+1}\ \#^5\ 0^\mu\#0^\mu\ \#^5\ tv0\alpha_10v0\alpha_20vt \]
    where $v = 0\#^4 0$, and $\alpha_1$, $\alpha_2$ are relational pattern analogues to the ones used in ~\cite{BREMER201215}. In particular, $\alpha_1$ contains no factor $\#^3$, while $\alpha_2$ contains no occurrence of $\#$ at all. We have $r_\alpha = \emptyset$. Finally, $t$ is obtained from the suffix $r_1\hat{\beta}_1\cdots r_\mu\hat{\beta}_\mu r_{\mu+1}$ by replacing every variable by $0$.

    It remains to explain why this construction still works with $\bar{r}$ instead of equality. The factors $\#^5$ force any matching of a word generated by $\alpha$ against $\beta$ to match each of the three by $\#^5$ divided parts against each other. In particular, the middle block $0^\mu\# 0^\mu$ has to match
    \[ a_2x_{1,1}x_{2,1}\cdots x_{\mu,1}b_2. \]
    Since $a_2$ and $b_2$ are related to variables that match unary words in the first block, they are forced to behave as in the equality construction. Consequently, exactly one variable $x_{i,1}$ is substituted by $\#$, while all other variables $x_{k,1}$ are substituted by $0$. Because all variables $x_{i,j}$, $j\in[13]$, are mutually related, the selected component $\hat{\beta}_i$ receives the marker $\#$ in all $x_{i,j}$ positions, while all other components $\hat{\beta}_k$, $k\neq i$, receive a $0$ in all $x_{k,j}$ positions. Thus, the construction selects exactly one component $\hat{\beta}_i$ to match against $v0\alpha_10v0\alpha_20v$ in $\alpha$.

    This selected component checks one possible error regarding encodings of accepting computations of $U$. More precisely, the matching forces 
    \[ h(0\alpha_10)=h'(\upgamma_i) \quad\text{and}\quad h(0\alpha_20)=h'(\updelta_i) \]
    for some $r_\beta$-valid substitution $h'$. Hence, a word generated by $\alpha$ is covered by $\beta$ exactly when the encoded computation violates at least one of the conditions of accepting computations, encoded in some component $\hat{\beta}_i$. Hence, an accepting computation encoding satisfies none of these conditions, therefore resulting in a word in $L_{NE}(\alpha,r_\alpha)$ that is not in $L_{NE}(\beta,r_\beta)$.

    The next difference from the equality-based construction is the use of related variables inside the components $\upgamma_i$ and $\updelta_i$. These variables are used either to propagate selected single letters such as $0$ and $\#$, or to compare unary words that encode counter values. Both cases are handled by $r_{ab}$ and $r_{rev}$ analogously to $r_=$. Thus, each component $\hat{\beta}_i$ can be constructed analogously as in ~\cite{BREMER201215}. Therefore
    \[ L_{NE}(\alpha,r_\alpha) \subseteq L_{NE}(\beta,r_\beta) \]
    holds if and only if $U$ has no accepting computation. This proves undecidability over the binary alphabet. For larger alphabets, the extension from ~\cite{BREMER201215} also applies. Additional components $\hat{\beta}_i$ can be introduced to handle the occurrence of any other letter than $0$ or $\#$, making a selection of the first $\mu$ components only necessary if these letters do not occur in a substitution of $\alpha$. The full formal verification of the adapted construction, due to its extensive length, is given in the full version of this paper~\cite{thisPaper}.
\end{proof}

\subsection{The Erasing Cases}
\label{subsec:inc-era}
We continue with erasing relational pattern languages. Similar to the non-erasing cases, we show that the inclusion problem is decidable under $r_{\varepsilon_=}$ and undecidable under the reversal relation $r_{rev}$ as well as under abelian equivalence $r_{ab}$. Indeed, in the case of $r_{\varepsilon_=}$, we can use the exact same approach as for the equivalence problem from Theorem~\ref{theorem:equivalence-commonEmptiness-decidable-pspace} to obtain decidability using the construction of two NFAs that represent $L_{E}(\alpha,r_\alpha)$ and $L_{E}(\beta,r_\beta)$ respectively.

\begin{theorem}
    Let $R = \{r_{\varepsilon_=}\}$. Given the relational patterns $(\alpha,r_\alpha),(\beta,r_\beta)\in\relpat$, checking $L_{E}(\alpha,r_\alpha) \subseteq L_{E}(\beta,r_\beta)$ is decidable.
\end{theorem}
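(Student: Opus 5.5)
We follow the approach of Theorem~\ref{theorem:equivalence-commonEmptiness-decidable-pspace}: we show that both $L_{E}(\alpha,r_\alpha)$ and $L_{E}(\beta,r_\beta)$ are regular, with effectively constructible NFAs, and then decide inclusion of regular languages. First, since $r_{\varepsilon_=}$ is reflexive, symmetric and transitive, the variables of $\alpha$ split into the equivalence classes $\mathcal{C}_\alpha = \{C_{\alpha,1},\dots,C_{\alpha,k_\alpha}\}$ of the closure $[r_\alpha]$. A substitution $h$ is $r_\alpha$-valid iff, for each class, either all its variables are mapped to $\varepsilon$ or all are mapped to nonempty words. This equivalence is the argument of Lemma~\ref{lemma:equivalence-equivalencerel-langproperty}, which carries over verbatim to the erasing case.

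For every subset $S\subseteq\mathcal{C}_\alpha$, let $\alpha_S$ be the pattern obtained from $\alpha$ by deleting all variables not in $\bigcup S$. Every $r_\alpha$-valid substitution falls into exactly one such $S$, namely the set of classes it maps to nonempty words. Hence
\[ L_E(\alpha,r_\alpha)=\bigcup_{S\subseteq\mathcal{C}_\alpha} L_{NE}(\alpha_S). \]
Because $\alpha$ is a regular pattern, each variable occurs at most once. Therefore $L_{NE}(\alpha_S)$ is obtained by replacing each variable by $\Sigma^+$ and each terminal by itself, which is a regular expression of linear size. This yields an NFA $A_{\alpha_S}$. Taking the union over the at most $2^{|\alpha|}$ subsets gives an NFA $A_\alpha$ with $L(A_\alpha)=L_E(\alpha,r_\alpha)$. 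We do the same for $\beta$ to obtain $A_\beta$.

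Finally, $L(A_\alpha)\subseteq L(A_\beta)$ is decidable: determinize $A_\beta$, complement it, form the product with $A_\alpha$, and test the result for emptiness. This works in exponential space at worst, or in PSPACE via on-the-fly subset construction. Alternatively, inclusion of a union reduces to checking $L_{NE}(\alpha_S)\subseteq L(A_\beta)$ for each $S$ separately.

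The only step that needs care is the decomposition identity. Both inclusions have to be checked: every word in $L_{NE}(\alpha_S)$ arises from a valid substitution that erases exactly the classes outside $S$, and conversely. Both directions are immediate from the class structure. There is no real obstacle beyond this; the argument yields decidability but not efficiency, because of the exponential number of subsets and the determinization.
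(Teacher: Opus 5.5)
Your proposal is correct and follows the paper's approach: the paper proves this theorem by reusing the construction from Theorem~\ref{theorem:equivalence-commonEmptiness-decidable-pspace}. That construction writes $L_E(\alpha,r_\alpha)$ as the union, over subsets $S$ of $r_{\varepsilon_=}$-classes, of the non-erasing languages $L_{NE}(\alpha_S)$ of regular sub-patterns, builds NFAs $A_\alpha$ and $A_\beta$, and then decides $L(A_\alpha)\subseteq L(A_\beta)$ instead of equality.
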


Again, for some pattern structures, we introduce some approaches with improved polynomial running time.

\begin{lemma}[{\Rightscissors~\cite[B.2]{thisPaper}}]\label{lemma:empty-together-poly-inclusion}
    Let $R = \{r_{\varepsilon_=}\}$. Given some alphabet $\Sigma$ and relational patterns $(\alpha,r_\alpha),(\beta,r_\beta)\in\relpat$.
    \begin{enumerate}
        \item If $\alpha$ and $\beta$ are \emph{terminal-free}, we can check in linear time whether $L_{E}(\alpha,r_\alpha) \subseteq L_{E}(\beta,r_\beta)$.
        \item If $|\Sigma| = 1$, we can check in linear time whether $L_{E}(\alpha,r_\alpha) \subseteq L_{E}(\beta,r_\beta)$.
    \end{enumerate}
\end{lemma}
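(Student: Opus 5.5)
The idea is to describe $L_E(\alpha,r_\alpha)$ in both restricted settings by a few numbers computable in one scan, and then compare these numbers. As in the proof sketch of Lemma~\ref{lemma:empty-together-poly-equivalence}, write $\mathcal{C}_\alpha=\{C_{\alpha,1},\dots,C_{\alpha,k_\alpha}\}$ for the equivalence classes of the closure $[r_\alpha]$. Variables of $\alpha$ that occur in no constraint form singleton classes. In a valid substitution each class is either entirely erased or entirely non-empty, and otherwise the images are unrestricted. Since $\alpha$ is regular, every non-empty class $C$ contributes at least $|C|$ letters. This bound is attained by mapping each variable of $C$ to a single letter, and any larger or arbitrary content can be placed in these variables. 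Everything below rests on this observation.

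\emph{Case 1 (terminal-free).} Here $\alpha=x_1\cdots x_n$.
\begin{enumerate}
\item If $\var(\alpha)=\emptyset$, the language is $\{\varepsilon\}$. If not, erasing all classes shows $\varepsilon\in L_E(\alpha,r_\alpha)$.
\item Let $m_\alpha:=\min_i |C_{\alpha,i}|$. Making only a minimal class non-empty shows that every $w$ with $|w|\ge m_\alpha$ is generated: split $w$ into $m_\alpha$ non-empty factors and assign them to the variables of that class in order.
\item No non-empty word shorter than $m_\alpha$ is generated, since any non-empty class contributes at least $m_\alpha$ letters.
\end{enumerate}
Hence $L_E(\alpha,r_\alpha)=\{\varepsilon\}\cup\Sigma^{\ge m_\alpha}$, and inclusion holds iff $m_\alpha\ge m_\beta$, with the trivial pattern (no variables) handled separately. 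Computing the classes and their minimum size is a single union--find pass, or a scan if the constraints are given class-wise. I regard this as (near-)linear, in the same sense as in Lemma~\ref{lemma:empty-together-poly-equivalence}.

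\emph{Case 2 ($|\Sigma|=1$).} Words are determined by their length, so $L_E(\alpha,r_\alpha)=\{\ta^{t_\alpha+s}\mid s\in S_\alpha\}$, where $t_\alpha$ is the number of terminals. Here $S_\alpha=\{0\}\cup\{s\mid s\ge \min_{\emptyset\neq T\subseteq\mathcal{C}_\alpha}\sum_{C\in T}|C|\}$, and this minimum is again $m_\alpha$. The justification is that a non-empty class may receive arbitrarily long unary words, so all larger lengths are reachable. If $\alpha$ has no variables, then $S_\alpha=\{0\}$. Inclusion of $\{t_\alpha\}\cup\{\ell\ge t_\alpha+m_\alpha\}$ in $\{t_\beta\}\cup\{\ell\ge t_\beta+m_\beta\}$ then reduces to a few arithmetic comparisons:
\begin{enumerate}
\item Either $t_\alpha=t_\beta$ or $t_\alpha\ge t_\beta+m_\beta$.
\item The infinite tail must be covered, i.e.\ $t_\alpha+m_\alpha\ge t_\beta+m_\beta$, or, if the two tails start at the same point, the condition from item 1 already suffices.
\end{enumerate}
I will spell out these cases, including those without variables, explicitly.

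The main obstacle is not the characterization itself, which follows from the minimality observation. It is the careful handling of the degenerate cases (patterns without variables, or with $m_\alpha$ undefined) and the justification of the linear-time bound for computing the class sizes.
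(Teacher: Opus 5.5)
Your proposal is correct and follows the paper's route. Both describe the language by the minimum class size $m_\alpha$, plus the number $t_\alpha$ of terminals when $|\Sigma|=1$, giving $L_E(\alpha,r_\alpha)=\{\varepsilon\}\cup\Sigma^{\ge m_\alpha}$ and $\{\ta^{t_\alpha}\}\cup\{\ta^{\ell}\mid \ell\ge t_\alpha+m_\alpha\}$ respectively, and then compare these numbers.

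In the unary case your two items give the right criterion. They simplify to ``$t_\alpha=t_\beta$ and $m_\alpha\ge m_\beta$, or $t_\alpha\ge t_\beta+m_\beta$''. Setting $m:=\infty$ for variable-free patterns settles your degenerate cases. Your clause about tails starting at the same point is vacuous. The only way to cover the tail without $t_\alpha+m_\alpha\ge t_\beta+m_\beta$ is $m_\beta=1$ and $t_\alpha+m_\alpha=t_\beta$, and item~1 already excludes this.

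Keep your version: the paper's sketch instead claims inclusion iff $t_\alpha\ge t_\beta$ and $m_\alpha\ge m_\beta$. That condition is not sufficient. Take $\alpha=\ta x_1x_2$ and $\beta=x_1x_2$, both with $(r_{\varepsilon_=},x_1,x_2)$: $\ta$ is generated by $\alpha$ but not by $\beta$. It is not necessary either. Take $\alpha=\ta\ta x_1$ against the same $\beta$: inclusion holds although $m_\alpha<m_\beta$.

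You also need not settle for ``near-linear'' time. The classes are the connected components of the constraint graph, so a DFS finds them in time linear in $|\alpha|+|r_\alpha|$. In Case~1 a variable-free pattern cannot occur anyway, since patterns are non-empty.
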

\begin{proof}[Proof Sketch.]
    By the same arguments as in \Cref{lemma:empty-together-poly-equivalence}, inclusion in the terminal-free case holds if and only if the minimum class size of $\alpha$ is at least that of $\beta$.
    For unary alphabets, the language is additionally determined by the number of terminals. Hence, inclusion holds if and only if $\alpha$ has at least as many terminals as $\beta$ and its minimum equivalence class size is at least that of $\beta$. In both cases, we can check the properties in linear time.
\end{proof}

For the undecidability proofs, we adapt the framework of Freydenberger and Reidenbach~\cite{Freydenberger2010}, which shows undecidability in the erasing case under $r_=$ for all alphabets $\Sigma$ with $|\Sigma|\geq 2$. The required adaptations for $r_{ab}$ and $r_{rev}$ rely on the same principle as in Theorem~\ref{prop:inclusion_non-erasing_abel-undec} (related single letters and unary words enforce equality). We therefore give only the main ideas for \Cref{thm:uirev} below. The full proof, due to its extensive length, is provided in the full version~\cite{thisPaper} for verification reasons.

\begin{theorem}[{\Rightscissors~\cite[B.3]{thisPaper}}]\label{thm:uirev}
    Let $\bar{r} \in \{r_{rev},r_{ab}\}$ and let $R = \{\bar{r}\}$. For all alphabets $\Sigma$ with $|\Sigma|\geq 2$, given $(\alpha,r_{\alpha}),(\beta,r_{\beta}) \in \relpat$, it is generally undecidable whether $L_E(\alpha,r_\alpha) \subseteq L_E(\beta,r_{\beta})$.
\end{theorem}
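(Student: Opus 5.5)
We reduce from the erasing inclusion problem under $r_=$, following the framework of Freydenberger and Reidenbach~\cite{Freydenberger2010}. We do not reduce directly from $r_=$. Instead we rebuild their construction, which shows that for every $\Sigma$ with $|\Sigma|\geq 2$ one can effectively construct patterns $\alpha,\beta$ such that $L_E(\alpha)\subseteq L_E(\beta)$ iff a given (universal) nondeterministic 2-counter or Turing machine $U$ has no accepting computation. As in the proof of Theorem~\ref{prop:inclusion_non-erasing_abel-undec}, we translate this into relational patterns. First, every variable occurring several times in $\alpha$ or $\beta$ is replaced by fresh variables, one per occurrence, so the patterns become regular. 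Second, $r_\alpha$ and $r_\beta$ contain $(\bar r,x,y)$ for all copies $x,y$ arising from the same original variable. (Alternatively, for $\alpha$ one can keep it as a pattern with $r_\alpha=\emptyset$ plus fixed terminal blocks, as in the non-erasing sketch.)

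The central lemma to establish is: \emph{whenever the original construction relies on equality of two occurrences, the words substituted for them are either a single letter or the empty word, or unary words over the same letter, or words forced by the surrounding terminal frame to be unary}. On such words $r_{rev}$ and $r_{ab}$ coincide with $r_=$. Indeed, single letters related by $\bar r$ are equal. A unary word $\ta^n$ is related only to $\ta^n$ under both relations, and $\varepsilon$ only to $\varepsilon$. Therefore I would go through the components of Freydenberger and Reidenbach in the following order.

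\begin{enumerate}
\item The \emph{selection/marker} part, which forces, via blocks of a separator letter, that a word of $L_E(\alpha)$ is matched against exactly one error-checking component of $\beta$. This uses variables substituted by single letters or $\varepsilon$, so $\bar r$ acts as equality.
\item The \emph{predicate components}, which detect malformed encodings, incorrect transitions, and counter mismatches. Counter values are unary blocks $0^n$, so comparing them via $\bar r$ is exact.
\item The \emph{converse direction}. Any $\bar r$-valid substitution of the relational $\beta$ could in principle produce more words than the equality version. We have to show that no spurious words allow a genuine accepting-computation encoding to lie in $L_E(\beta,r_\beta)$.
\end{enumerate}

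Step~3 is the main obstacle. Relaxing equality to $\bar r$ enlarges $L_E(\beta,r_\beta)$, and in the erasing case variables that were implicitly forced unary by repeated occurrences may now take non-unary values with reversed or permuted content. My plan here is to reuse the standard "locality" argument of~\cite{Freydenberger2010}. For a word $h(\alpha)$ encoding a valid computation, the long separator blocks (e.g.\ $\#^k$ with $k$ larger than any separator run inside the encoding) fix how every $\bar r$-valid morphism must cut the word. Consequently each relevant related variable lands inside a region of the form $0^*$ or a single-letter slot, and equality is recovered. Where the original proof uses a non-unary repeated variable, for instance to copy a whole configuration, I would split it into letter-wise or block-wise related variables separated by terminals. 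This keeps each related pair short or unary, at the cost of a polynomial blow-up.

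Finally, for $|\Sigma|>2$ we adopt the Freydenberger--Reidenbach extension, adding components that absorb words containing letters outside the binary encoding alphabet. These components use only unconstrained variables or single-letter relations, so they transfer verbatim. Thus $L_E(\alpha,r_\alpha)\subseteq L_E(\beta,r_\beta)$ iff $U$ has no accepting computation, which gives undecidability for both $\bar r\in\{r_{rev},r_{ab}\}$.
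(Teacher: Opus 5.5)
Your overall plan is the paper's: rebuild the Freydenberger--Reidenbach construction, turn repeated occurrences into $\bar r$-related copies, and use that $\bar r$ acts as equality on single letters and unary words. The gap is that your ``central lemma'' is false for the erasing construction you say you follow, and it fails exactly where the paper has to do new work.

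In the erasing framework, a component is not selected by single-letter variables; that is the non-erasing gadget of Bremer and Freydenberger. It is selected by squares. The prefix $vv$, with $v=0\#^30$, must be produced by $x_1x_{1'}\cdots x_\mu x_{\mu'}$. The selector $x_i$ is substituted by $v$ itself, a non-unary word over both letters. Its further copies serve as anchors in the other $\#^6$-regions, e.g.\ $\ddot\beta_i=x_{i_4}\eta_ix_{i_5}$ against $vuv$.

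Once $x_ix_i$ becomes $x_ix_{i'}$, the leftmost nonempty pair only yields a prefix of $vv$ of the form $ww^R$ (for $r_{rev}$) or $ww'$ with $w'\equiv_{ab}w$ (for $r_{ab}$). So you must check directly that the only such prefix is $vv$ itself, with $w=w'=v$. For $r_{rev}$, $vv$ has this form at all only because $v$ is a palindrome. This is what the paper's case tables do.

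Likewise, the letters $0$ and $\#$ inside the error-checking components are not terminals. They are surrogate variables $z_i,\hat z_{i_k}$, pinned only by matching $\eta_i=z_i\hat z_{i_1}\cdots\hat z_{i_5}z_{i'}$ against $u=0\#^50$. Under $\bar r$ you need that related words have equal length. Then $2|z_i|+5|\hat z_{i_1}|=7$ forces $z_i=z_{i'}=0$ and $\hat z_{i_k}=\#$. Only after this does your single-letter/unary principle take over.

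Your Step~1 treats all of this as immediate, and your Step~3 locality argument does not reach it either. The $\#^6$ blocks only cut the word into three regions, and inside them the selector copies land on $v$, not on a $0^*$-block or a single-letter slot. For $r_{ab}$, the copies $x_{i_k}$ are a priori only permutations of $0^2\#^3$, so their alignment has to come from position, not from the relation.

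Your remedy for non-unary repeated variables (splitting them into letter-wise or block-wise related variables separated by terminals) does not work in the erasing setting. All components $\hat\beta_j,\ddot\beta_j$ of $\beta$ must be terminal-free so that every non-selected component can be erased completely. This is exactly why Freydenberger and Reidenbach replace $0$ and $\#$ by the variables $z_i,\hat z_i$. A terminal placed inside a component survives in all non-selected components and breaks the matching. Letter-wise variables can also be erased or lengthened independently.

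The missing ingredient is therefore the explicit combinatorial check that the erasing selection gadget still forces the intended matching under $r_{rev}$ and $r_{ab}$. That means $vv$ against the pairs $x_jx_{j'}$, and $u$ against $\eta_i$. With that in place, the rest of your plan matches the paper's argument.
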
 
\begin{proof}[Proof Sketch.]
    Let $A$ be a nondeterministic 2-counter automaton (see ~\cite{thisPaper} for a definition - for this sketch we can work with $A$ abstractly and just acknowledge that the emptiness problem for nondeterministic 2-counter automata is undecidable). Consider, first, the following construction under the equality relation. Let
    \[\alpha_A := v\ v\ \#^6\ v\ x\ v\ y\ v\ \#^6\ v\ u\ v\]
    where $x,y$ are distinct variables, $v = 0\#^30$ and $u = 0\#^50$ and 
    \[\beta_A:=x_1x_{1'}\ \ldots x_{\mu}x_{\mu'}\#^6\hat{\beta}_1\ldots\hat{\beta}_{\mu}\#^6\ddot{\beta_1}\ldots\ddot{\beta_{\mu}}\]
    with, for all $i \in \{1, \ldots, \mu\}$, and $\ddot{\beta_i} := x_{i_4}\ \eta_i\ x_{i_5}$, where $x_i$, $x_{i'}$, $x_{i_4}$, $x_{i_5}$, $\ldots$ are distinct variables with $(r_=,x_i, x_{i'})$, $(r_=,x_i, x_{i_1})$, $(r_=,x_i, x_{i_2})$, $(r_=,x_i,x_{i_3})$, $(r_=,x_i,x_{i_4})$, $(r_=,x_i,x_{i_5}) \in r_{\beta_A}$ and all $\eta_i \in X^*$ are terminal-free patterns.
    We have
    $\eta_i := z_{i}\ \hat{z}_{i_1}\ \hat{z}_{i_2}\ \hat{z}_{i_3}\ \hat{z}_{i_4}\ \hat{z}_{i_5}\ z_{i'}$ with distinct variables $z_i,z_{i'},\hat{z}_{i_1},\hat{z}_{i_2},\hat{z}_{i_3},\hat{z}_{i_4},\hat{z}_{i_5}$ and $(r_=,z_i,z_{i'})$, $(r_=,\hat{z}_{i_1},\hat{z}_{i_2})$, $(r_=,\hat{z}_{i_1},\hat{z}_{i_3})$, $(r_=,\hat{z}_{i_1},\hat{z}_{i_4})$, $(r_=,\hat{z}_{i_1},\hat{z}_{i_5}) \in r_{\beta_A}$.
    
    The main concept of the proof of \Cref{thm:uirev} is to show that $L_E(\alpha_A,r_{\alpha_A}) \subseteq L_E(\beta_A,r_ {\beta_A})$ iff $A$ has no accepting computation. All possible properties of non-accepting computations are listed in the $\hat{\beta}_1 \ldots \hat{\beta}_{\mu}$ part where each $\hat{\beta}_i, i \in \{1, \ldots, \mu \}$ encodes property of a non-accepting computation together with selecting variables $x_{i_j}$ for $j \in {1,2,3}$. The part of $\beta_A$ and $\alpha_A$ before the first $\#^6$ ensures that exactly one pair $x_i,x_i', i \in \{1, \ldots, \mu\}$ is selected, resulting in that exactly one non-accepting computation is selected and that exactly one $\ddot{\beta}_i, i \in \{1, \ldots, \mu\}$ is selected. With this, we get that $\eta_i, i \in \{1, \ldots, \mu\}$ matches $u$ and thus, that $z_i = 0$ and $\hat{z}_{i_1} = \#$. We use these two variables to encode the $0$ and the $\#$ in the (non-accepting) computation encodings. If we can show that this selecting process also works in the setting $R = \{\bar{r}\}$ for $\bar{r} \in \{r_{rev},r_{ab}\}$, i.e., we can construct two variables with which we can encode the letters $0$ and $\#$, and we can model an exact match for two unary words to match counter value encodings of configuration of $A$ inside some $\hat{\beta}_i$, then this proof also works in this setting. For two variables $x,y$, we have if $x = 0$ (or $x = \#$) and $(x,y) \in \bar{r}$ for $\bar{r} \in \{r_{rev},r_{ab}\}$, then $y = 0$ (or $y = \#$ respectively). Thus, we can model with the relation $r_{rev}$ as well as with the relation $r_{ab}$ identical variables if we know a one letter substitution of one variable. Moreover, we get for two variables $x,y$ with $(x,y) \in \bar{r}$ for $\bar{r} \in \{r_{rev},r_{ab}\}$ that $x=y$ must hold if $x$ or $y$ gets substituted by a unary word.

    If we want $L_E(\alpha_A,r_{\alpha_A}) \subseteq L_E(\beta_A,r_{\beta_A})$, we need to match the factor $v\ v$ to $x_1x_{1'} \ldots x_{\mu}x_{\mu'}$ and $v\ u\ v$ to $\ddot{\beta}_1 \ldots \ddot{\beta}_{\mu}$ due to our delimiter $\#^6$. We start with the match of $v\ v$ to $x_1x_{1'} \ldots x_{\mu}x_{\mu'}$. We have $v\ v = 0\ \#^3 0\ 0 \#^3 0$. Since the variables are pairwise in relation, we can always assume that we start substituting with the left variable. If a variable gets substituted to something different from $\varepsilon$, then its corresponding variable also gets substituted to something different from $\varepsilon$. For $r_{rev}$, we get the following reasonable possibilities for the beginning of the match:
    \begin{table}[h!]
        \centering
            \begin{tabular}{l|l|l}
                $x_i$ &  $x_{i'}$ & Possible prefix for $v\ v = 0\#^300\#^30$? \\
                \hline
                 $0$ & $0$ & $00 \neq 0\#$ \\
                 $0\#$ & $\# 0$ & $\ 0\#^20 \neq 0\#^3$ \\
                 $0\#^2$ & $\#^20$ & $0\#^4 0 \neq 0\#^300$ \\
                 $0\#^3$ & $\#^30$ &  $\ 0\#^60 \neq 0\#^300\#^2$ \\
                 $0\#^30$ & $0\#^3 0$ & $0 \#^300\#^30 = v\ v$
            \end{tabular}
        \label{tab:posvrev}
    \end{table}
    
    \noindent For $r_{ab}$, we get the following reasonable possibilities for the beginning of the match:
    \begin{table}[h!]
        \centering
            \begin{tabular}{l|l|l}
                $x_i$ &  $x_{i'}$ & Possible prefix for $v\ v = 0\#^300\#^30$? \\
                \hline
                 $0$ & $0$ & $00 \neq 0\#$ \\
                 $0\#$ & $0\#$ or $\# 0$ & $0\#0\#,\ 0\#^20 \neq 0\#^3$ \\
                 $0\#^2$ & $0\#^2$ or $\#0\#$ or $\#\#0$ & $0\#^20\#^2,\ 0\#^30\#,\ 0\#^4 0 \neq 0\#^300$ \\
                 $0\#^3$ & $0\#^3$ or $\#0\#^2$ or $\#^20\#$ or $\#^30$ &  $0\#^30\#^3,\ 0\#^3\#0\#^2,\ 0\#^50\#,\ 0\#^60 \neq 0\#^300\#^2$ \\
                 $0\#^30$ & $0\#^3 0$ & $0 \#^300\#^30 = v\ v$
            \end{tabular}
        \label{tab:posv}
    \end{table}
    
    Thus, the only possibility is that for exactly one $i \in \{1, \ldots, \mu\}$ the variables $x_i$ and $x_{i'}$ get substituted with $v$ and all other variables $x_j,x_{j'}$ for $j \neq i, j \in \{1, \ldots, \mu\}$ get substituted with $\varepsilon$. This together with the matching from $v\ u\ v$ to $\ddot{\beta}_1 \ldots \ddot{\beta}_{\mu}$ implies that $\eta_i$ needs to match $u$. For this matching between $\eta_i = z_{i}\ \hat{z}_{i_1}\ \hat{z}_{i_2}\ \hat{z}_{i_3}\ \hat{z}_{i_4}\ \hat{z}_{i_5}\ z_{i'}$ and $u = 0\#^50$, we get the following possibilities for both relations, $r_{ab}$ and $r_{rev}$ (recall that $(\bar{r},z_i,z_{i'}),(\bar{r},\hat{z}_{i_1},\hat{z}_{i_2}),(\bar{r},\hat{z}_{i_1},\hat{z}_{i_3}),(\bar{r},\hat{z}_{i_1},\hat{z}_{i_4}),(\bar{r},\hat{z}_{i_1},\hat{z}_{i_5}) \in r_{\beta_A}$ in this setting):
    
    \begin{table}[ht]
        \centering
        \begin{tabular}{l|l|l}
           $z_i$  & $\hat{z}_{i_1}$ & Possible substitution for $u = 0\#^50$? \\ \hline
            $\varepsilon$ & $0$ or $\#$ & $0^5,\ \#^5 \neq 0\#^50$ \\
            $0$ & $\#$ & $0\#^50 = 0\#^5 0 = u$ \\
            $0\#$ & $\#$ & $0\#\#^5\#0 \neq 0 \#^50$ \\
            $0\#\#$ & $\#$ & $0\#^2\#^5\#^20 \neq 0\#^50$ \\
            $0\#\#\#$ & $\varepsilon$ & $0\#^60 \neq 0\#^50$
        \end{tabular}
        \label{tab:posu}
    \end{table}

    Thus, the only possibility is that $z_i$ and $z_{i'}$ get substituted with $0$ and $\hat{z}_{i_k}, k \in \{1,2,3,4,5\}$ get substituted with $\#$. With this, we can encode the $0$ and the $\#$ and thus, the proof of Freydenberger and Reidenbach~\cite{Freydenberger2010} works in this setting analogously.
\end{proof}

This concludes our results for the inclusion problem. For the remaining cases $r_{|w|}$, $r_{perm}$, $r_{com^*}$, and $r_{com^+}$, decidability remains open, as the reduction properties used above seem not to carry over: $r_{|w|}$ and $r_{perm}$ do not enforce equality of letters or unary words over at least binary alphabets, while $r_{com^*}$ or $r_{com^+}$ do not even enforce equal lengths. Thus, the decidability boundary for structurally weaker relations remains open.

% === MEMBERSHIP === %
\section{The Membership Problem}
In this final section, we briefly discuss the membership problem -- also known as the matching problem -- for relational pattern languages for all relations in $\mathcal{R}$ and general patterns.
For the case of $r_{=}$, NP-completeness of the membership problem was originally shown in \cite{Angluin1980} and \cite{Jiang1994} for the non-erasing and erasing cases, respectively. In fact, independent communities have shown this under several considered restrictions (cf. \cite{FERNAU2015287,Manea2019}). As shown in this section, all but one case remain NP-complete.

\begin{proposition}\label{prop:membership-innp-relational}
   Let $r\in\mathcal{R}$ and $R = \{r\}$. Let $w\in\Sigma^*$ and $(\alpha,r_\alpha)\in\relpat$. Deciding $w\in L_{X}(\alpha,r_\alpha)$, for $X\in\{E,NE\}$, is in NP.
\end{proposition}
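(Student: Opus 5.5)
We use the standard guess-and-verify argument. The certificate is a substitution $h$ restricted to $\var(\alpha)$, written as the list of words $h(x)$ for $x\in\var(\alpha)$. Since $\alpha$ is a regular pattern, every variable occurs exactly once. Hence any $h$ with $h(\alpha)=w$ satisfies $\sum_{x\in\var(\alpha)}|h(x)|\le |w|$, so the certificate has size $O(|w|+|\alpha|)$, which is polynomial in the input.

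The verifier works as follows.
\begin{enumerate}
\item It checks that $h(\alpha)=w$ by concatenating the guessed words and terminals and comparing the result with $w$; this takes linear time.
\item In the case $X=NE$, it checks that $h(x)\neq\varepsilon$ for all $x\in\var(\alpha)$.
\item It checks $r_\alpha$-validity. For each constraint $(r,x_i,x_j)\in r_\alpha$, it tests $(h(x_i),h(x_j))\in r$.
\end{enumerate}
The input lists $r_\alpha$ explicitly, so the number of constraints is polynomial in the input size. It therefore suffices that membership $(u,v)\in r$ can be decided in polynomial time in $|u|+|v|$ for each $r\in\mathcal{R}$.

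This last point is the only step needing care, and we go through the relations one by one.
\begin{itemize}
\item For $r_=$, $r_{|w|}$, $r_{rev}$ and $r_{\varepsilon_=}$, the test is a direct comparison in linear time.
\item For $r_{ab}$, we compare the letter counts $|u|_\ta$ and $|v|_\ta$ for all $\ta\in\Sigma$.
\item For $r_{ssq}$, a greedy left-to-right scan decides whether $u\in\SubSeq(v)$ in linear time.
\item For $r_{perm}$, we require $|u|=|v|$. We then build the map $v[k]\mapsto u[k]$ position by position and check that it is well defined and injective on $\al(v)$. Any injective map on $\al(v)$ extends to a bijection of $\Sigma$, so this test is correct.
\item For $r_*$, we require $|v|$ to divide $|u|$ and $u=v^{|u|/|v|}$. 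When $v=\varepsilon$, we require $u=\varepsilon$.
\item For $r_{com^*}$ and $r_{com^+}$, we use the classical fact that $u,v$ are powers of a common word iff $uv=vu$, which is tested in linear time. For $r_{com^*}$ this characterization also covers empty arguments. For $r_{com^+}$ we additionally require $u,v\in\Sigma^+$; in that case a common root $z\in\Sigma^+$ exists, namely the primitive root.
\end{itemize}

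Thus the verifier runs in polynomial time, and it accepts some certificate iff $w\in L_X(\alpha,r_\alpha)$. Hence the problem is in NP.
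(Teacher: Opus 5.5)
Your proof is correct and takes the same guess-and-verify route as the paper: guess $h$, then check $h(\alpha)=w$, the non-erasing condition, and the constraints in polynomial time. You also supply the certificate-size bound and the per-relation polynomial-time membership tests, which the paper only asserts. One caveat, which applies equally to the paper's argument: the definition of $\relpat$ formally allows $r_\alpha$ to mention variables outside $\var(\alpha)$. In that case the certificate must also fix values for those variables, and one needs a short argument that for every $r\in\mathcal{R}$ they can be chosen of length polynomial in the input.
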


Indeed, NP-containment follows by the known arguments from the cases arising in $r_=$. We can guess a substitution $h$ with $|h(\alpha)| = |w|$, check the relational constraints in polynomial time, and finally check whether $h(\alpha) = w$. Regarding NP-hardness, we revisit the exposition given in \cite{Manea2019}, where a reduction is given from the known NP-complete positive 1-in-3-SAT (no negated variables, 3-CNF clauses, satisfiable with exactly one literal per clause set to true) problem. For example, assuming $\Sigma = \{0,\#\}$, the formula $$\{\{X_1,X_2,X_3\},\{X_2,X_4,X_5\},\{X_1,X_4,X_5\},\{X_3,X_4,X_6\}\}$$ is in 3-CNF and 1-in-3 satisfiable if and only if the word $w = 0\#0\#0\#0$ is in the erasing pattern language of the pattern $\alpha = x_1x_2x_3\#x_2x_4x_5\#x_1x_4x_5\#x_3x_4x_6$. Using $w = 0^4\#0^4\#0^4\#0^4$, this construction works for the NE case as well. For most of the relations considered in this paper, this framework suffices.

\begin{lemma}[{\Rightscissors~\cite[C.1]{thisPaper}}]\label{lemma:membership-trivial-hardness}
    Let $r\in\{r_=, r_{|w|}, r_{ssq}, r_{ab}, r_{perm}, r_{rev}, r_{*},r_{com^+},r_{\varepsilon_=}\}$ s.t. $R = \{r\}$ and let $|\Sigma|\geq 2$. For $w\in\Sigma^*$, $(\alpha,r_{\alpha})\in\relpat$, deciding $w\in L_{E}(\alpha,r_\alpha)$ is NP-hard. If $r \notin \{r_{com^+},r_{\varepsilon_=}\}$, deciding $w\in L_{NE}(\alpha,r_\alpha)$ is NP-hard.
\end{lemma}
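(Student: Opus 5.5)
The plan is to reduce positive 1-in-3-SAT to membership by reusing the construction of \cite{Manea2019} recalled above. The key observation is that this construction uses \emph{no} relational constraints at all when translated into our setting. Given a formula $\varphi=\{C_1,\dots,C_m\}$ over variables $X_1,\dots,X_n$, each clause $C_j=\{X_{j_1},X_{j_2},X_{j_3}\}$ is encoded by a block of three pattern variables, and the blocks are separated by $\#$.

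The point requiring care is that in a relational pattern the first component must be a \emph{regular} pattern. Each occurrence of $X_k$ in $\varphi$ therefore receives a fresh variable. All fresh variables belonging to the same $X_k$ are then tied together by constraints $(r,y,y')\in r_\alpha$, which must force them to carry the same truth value. Concretely, I would build a regular pattern $\alpha$ in which the $t$-th occurrence of $X_k$ becomes a variable $x_{k,t}$, and take $w=0\#0\#\cdots\#0$ with $m$ blocks in the erasing case, or $0^4\#\cdots\#0^4$ in the non-erasing case. Since $w$ contains $\#$ exactly $m-1$ times and $\alpha$ contains $m-1$ terminal $\#$s, a standard counting argument applies: in the erasing case, $x_{k,t}$ must map to $0$ or $\varepsilon$, and each block contains exactly one $0$. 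In the NE case the three images in a block have total length 4 and contain no $\#$, so exactly one of them is $00$ and the other two are $0$. The truth value of $X_k$ is thus read off as ``$h(x_{k,t})=0$'' (E) or ``$h(x_{k,t})=00$'' (NE).

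The remaining step is to check, relation by relation, that linking the copies $x_{k,t},x_{k,t'}$ by $r$ forces equal choices among $\{\varepsilon,0\}$, respectively among $\{0,00\}$, while still admitting the intended substitutions.
\begin{itemize}
\item For $r_=$, $r_{|w|}$, $r_{ab}$, $r_{perm}$ and $r_{rev}$, related words over $\{0\}$ necessarily have equal length. For $r_{perm}$ I would note that a unary word is mapped to a unary word of the same length, and that, the image alphabet being forced to $\{0\}$ by the counting argument, the words coincide.
\item For $r_{ssq}$ and $r_*$, I would add both directions $(r,y,y')$ and $(r,y',y)$, exactly as in the proof of Proposition~\ref{proposition:equivalence-equalityhardrelations}, which simulates equality.
\item In the erasing case, $r_{com^+}$ relates only nonempty words, so here $\{\varepsilon,0\}$ is the wrong encoding. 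I would instead use the encoding $\{0,00\}$ from the NE construction. Every variable is then nonempty, and $0$ and $00$ commute, so commutation does not distinguish them. I therefore expect to need a different gadget: replace each variable by two adjacent copies, or encode truth by $0$ versus $\#$ inside separately delimited blocks. Alternatively, use that in the erasing setting $r_{com^+}$ forces related variables to be nonempty while leaving unconstrained variables free, so that the constraint only says ``both are nonempty''. This coincides with $r_{\varepsilon_=}$ restricted to the case where erasing is allowed only in the unrelated variables.
\item For $r_{\varepsilon_=}$ in the erasing case, the encoding $\{\varepsilon,0\}$ works directly: related copies are either both empty or both nonempty, and by counting the nonempty ones equal $0$.
\end{itemize}

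This last pair, $r_{com^+}$ and $r_{\varepsilon_=}$ in the erasing case, is the main obstacle. My first attempt for $r_{com^+}$ is to apply the same ``$\varepsilon$ versus nonempty'' encoding: give each occurrence of $X_k$ an unconstrained fresh variable, together with one relation $(r_{com^+},\cdot,\cdot)$ that the intended substitution satisfies. Membership in NP is already provided by Proposition~\ref{prop:membership-innp-relational}, so only hardness needs to be shown. The exclusion of $r_{com^+}$ and $r_{\varepsilon_=}$ from the NE claim is consistent with this: in the non-erasing setting these constraints collapse, respectively become trivial (cf.\ the Remark in Section~3.1), so no gadget of this kind can be expected to transfer.
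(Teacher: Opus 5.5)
Your occurrence-splitting version of the 1-in-3-SAT reduction is sound for $r_=$, $r_{|w|}$, $r_{ab}$, $r_{perm}$, $r_{rev}$, for $r_{ssq}$ and $r_*$ with constraints in both directions, and for $r_{\varepsilon_=}$ in the erasing case. This is the framework the paper says suffices for most relations. The intended substitutions are valid because all these relations contain $(u,u)$ for unary $u$. Conversely, each relation forces related unary words, or for $r_{\varepsilon_=}$ related words in $\{\varepsilon,0\}$, to encode the same truth value.

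\textbf{The gap is $r_{com^+}$ in the erasing case.} The lemma explicitly claims it, but you only list candidate gadgets, verify none, and the ones you spell out cannot work. Any variable occurring in a constraint $(r_{com^+},\cdot,\cdot)$ is nonempty under \emph{every} valid substitution, so the ``$\varepsilon$ versus nonempty'' encoding is unavailable for constrained variables. Unconstrained variables, in turn, carry no information from one occurrence to another. The encoding $\{0,00\}$ fails because $0$ and $00$ commute. Finally, the constraint does not ``only say that both are nonempty''; it forces a common primitive root. That root, for instance the alphabet of the image, is the only information propagated along a class of related variables, so it is where the truth value must be encoded.

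Your item ``encode truth by $0$ versus $\#$ inside separately delimited blocks'' points the right way. It still needs three ingredients you do not supply:
\begin{enumerate}
\item erasable, unconstrained padding, so the clause check does not depend on the position of the true literal;
\item a delimiter that cannot occur inside a block;
\item reading truth off the alphabet of the image rather than the exact word, since related images may be long.
\end{enumerate}
For instance, link all occurrence variables $y$ of each $X_k$ by $r_{com^+}$, adding $(r_{com^+},y,y)$ for single occurrences, which just forces $h(y)\neq\varepsilon$. Then set
\[ \alpha=\gamma_1\,00\,\gamma_2\,00\cdots 00\,\gamma_m,\qquad \gamma_j=u_j\,y_{j,1}y_{j,2}y_{j,3}\,u_j',\qquad w=\#\#0\#\#\,(00\,\#\#0\#\#)^{m-1}, \]
with $u_j,u_j'$ unconstrained. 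The factor $00$ occurs in $w$ exactly at the $m-1$ separators, and consecutive separators of $\alpha$ are kept apart by the nonempty $y$'s. Hence every $\gamma_j$ is mapped onto $\#\#0\#\#$. The factor $h(y_{j,1}y_{j,2}y_{j,3})$ has length at least $3$, so it contains the middle $0$, and exactly one of the three images contains $0$. Related words are powers of a common word, so ``$0\in\al(h(y))$'' is constant over the occurrences of $X_k$ and defines a 1-in-3 assignment. Conversely, mapping true to $0$ and false to $\#$, with $u_j,u_j'\in\{\varepsilon,\#,\#\#\}$, gives a valid substitution.

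Separately, your closing remark that $r_{com^+}$ ``collapses'' in the non-erasing setting is wrong: NE membership under $r_{com^+}$ is NP-complete by Theorem~\ref{theorem:membership-np-complete-cases}. This lies outside the statement, though.
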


This leaves us with the cases of $r_{com^+} (r_{com^*})$ in the non-erasing case and with $r_{com^*}$ in the erasing case. For these, a slightly different reduction is necessary but possible to achieve. As the general idea is still a natural reduction from satisfiability, in this case the variant 3-CNF-SAT (or just 3-SAT), we omit the formal proof to ~\cite{thisPaper}. This concludes the following.

\begin{theorem}[{\Rightscissors~\cite[C.2]{thisPaper}}]\label{theorem:membership-np-complete-cases}
    Let $r\in\mathcal{R}$ s.t. $R = \{r\}$ and assume $|\Sigma|\geq 2$. For $w\in\Sigma^*$ and $(\alpha,r_{\alpha})\in\relpat$, deciding $w\in L_{E}(\alpha,r_\alpha)$ is NP-complete. If $r \neq r_{\varepsilon_=}$, deciding $w\in L_{NE}(\alpha,r_\alpha)$ is NP-complete.
\end{theorem}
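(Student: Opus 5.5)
The plan is to combine the membership in NP from Proposition~\ref{prop:membership-innp-relational} with the hardness results of Lemma~\ref{lemma:membership-trivial-hardness}. The only work left is the NP-hardness of three cases: $r_{com^*}$ in the erasing case, and $r_{com^+}$ and $r_{com^*}$ in the non-erasing case. These two relations coincide on non-empty words, so in the NE case a single construction covers both.

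For the missing cases we would reduce from 3-SAT. Let $\varphi$ have propositional variables $X_1,\dots,X_n$ and clauses $c_1,\dots,c_m$, and let $\Sigma\supseteq\{0,\#\}$. Commutation forces related non-empty words to be powers of a common primitive root. We therefore encode a truth value by the \emph{length} of a substitution within a fixed unary power.

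For each $X_i$ we use a gadget $\#\, y_i\, \bar y_i\, \#$ matched against $\#0^3\#$. We relate $y_i$ and $\bar y_i$ to a shared anchor variable $z$ by $r_{com^*}$, where $z$ is forced to be $0$ by a factor $\# z \#$ matched against $\#0\#$. Since $y_i\bar y_i = 000$ and both are non-empty, exactly one of $y_i,\bar y_i$ equals $0$ and the other equals $00$. We read ``$y_i=00$'' as $X_i$ true.

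Each literal occurrence in a clause gets a fresh copy variable $u$, related via $r_{com^*}$ to $z$ and to the corresponding $y_i$ or $\bar y_i$. Commutation alone does not force equal lengths, so a further gadget $\# u\, u' \#$ against $\#0^3\#$ is needed, with $u'$ related to the complementary variable. This makes $|u|=|y_i|$, which propagates the literal's value. A clause gadget $\# u_1 u_2 u_3 v \#$ is then matched against $\# 0^{7}\#$ with a slack variable $v$ related to $z$. Literal lengths lie in $\{1,2\}$, so the clause is satisfiable by these lengths iff at least one literal has length $2$, i.e., $|v|\in[1,3]$. Taking $w$ as the concatenation of all blocks gives the reduction. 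The separators $\#$ pin the alignment, because all variables are forced into $0^+$ by commutation with $z$.

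For the erasing $r_{com^*}$ case we would reuse the construction but shift lengths by one, using $0^{k+1}$ targets with an additional constant terminal $0$ in each gadget. The possibility of $\varepsilon$ then only corresponds to the value $0$ in a shifted encoding, and the counting argument is unchanged. Concretely, the literal values become $\{0,1\}$ and the clause needs $|v|\le 2$ against the appropriate block.

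The main obstacle is the copying step. Commutation does not transmit exact length, so the copy gadgets must be designed carefully. We must also check that no unintended alignment arises, in particular that no variable can absorb a $\#$. Absorbing a $\#$ is excluded because every variable commutes with $z=0$, or is empty in the E case. All other steps are routine polynomial-size constructions.
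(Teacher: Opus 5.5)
Your decomposition is the paper's: NP membership from Proposition~\ref{prop:membership-innp-relational}, hardness for all other cases from Lemma~\ref{lemma:membership-trivial-hardness}, and a separate 3-SAT reduction for $r_{com^*}$ in the erasing case and $r_{com^+}$/$r_{com^*}$ in the non-erasing case. However, the reduction you sketch for the commutation cases fails at the step you flag as the main obstacle.

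\textbf{Commutation carries no length information.} Any two words in $\{0\}^*$ commute. Once all variables are forced into $0^+$ through their relation to $z$, every constraint $(r_{com^*},s,t)$ between them is satisfied automatically. Relating $u'$ to $\bar y_i$ therefore does not give $|u'|=|\bar y_i|$. You note yourself that commutation does not force equal lengths, and then rely on it anyway. So the block $\#uu'\#$ against $\#0^3\#$ only yields $|u|\in\{1,2\}$, independently of $|y_i|$.

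\textbf{The pattern is not regular.} $u$ occurs both in $\#uu'\#$ and in the clause block $\#u_1u_2u_3v\#$. Relational patterns are built on regular patterns, so every variable may occur at most once. Replacing the second occurrence by a fresh variable only reintroduces a vacuous commutation constraint.

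Consequently every block of $w$ can be matched independently, and $w$ lies in the language for every formula.

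\textbf{The clause block is also wrong on its own.} Nothing bounds $|v|$ from above, so the all-false choice with $|v|=4$ matches $\#0^7\#$. A non-empty slack variable only enforces an upper bound on $\sum_j|u_j|$, so the roles of the two lengths would have to be swapped.

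\textbf{The erasing variant is worse.} $\varepsilon$ commutes with every word, so an erased copy transmits nothing at all.

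The missing idea is this: the only information $r_{com^+}$ and $r_{com^*}$ propagate between non-empty substitutions is their primitive root. Exponents, and hence lengths, can only be controlled locally by the terminals of $w$ inside a single block. A correct reduction therefore has to:
\begin{enumerate}
\item encode the truth value of $X_i$ in the primitive root of a substitution, e.g.\ via a block that can only be matched if $y_i$ is a power of one of two fixed distinct primitive words such as $\ta$ and $\ta\tb$;
\item let each literal copy inherit that root via commutation;
\item use clause blocks that can be matched iff at least one of their literal copies carries the \emph{true} root;
\item in the erasing case, use the block structure of $w$ to force the relevant variables to be non-empty.
\end{enumerate}
You also need a non-circular alignment argument. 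Your current one presupposes $z=0$; it could instead count occurrences of $\#$. Without these gadgets, the three commutation cases of the theorem remain unproved.
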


Finally, we recall from \Cref{remark:equivalence-nonerasing-commonempty-linear} that the relation $r_{\varepsilon_=}$ has no effect in the non-erasing case, resulting in each pattern language $L_{NE}(\alpha,r_\alpha)$, for $(\alpha,r_\alpha)\in\relpat$ and $R = \{r_{\varepsilon_=}\}$, being equal to the pattern language $L_{NE}(\alpha)$ of the regular pattern $\alpha$. The membership problem for regular patterns has been shown to be decidable in linear time by Shinohara in \cite{Shinohara1983}. Hence, the following immediately holds.

\begin{corollary}\label{corollary:membership-poly-case}
    Let $R = \{r_{\varepsilon_=}\}$. For $w\in\Sigma^*$ and $(\alpha,r_{\alpha})\in\relpat$, we can decide in linear time whether $w\in L_{NE}(\alpha,r_\alpha)$.
\end{corollary}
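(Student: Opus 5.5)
The corollary follows by combining the remark with Shinohara's linear-time membership algorithm for regular patterns. The plan is to first verify, directly from the definitions, that $L_{NE}(\alpha,r_\alpha)=L_{NE}(\alpha)$ for every $(\alpha,r_\alpha)\in\relpat$ with $R=\{r_{\varepsilon_=}\}$, as claimed in \Cref{remark:equivalence-nonerasing-commonempty-linear}. The inclusion $\subseteq$ is immediate: an $r_\alpha$-valid non-erasing substitution is in particular a non-erasing substitution of the plain pattern $\alpha$. For $\supseteq$, take any $h$ with $h(x)\in\Sigma^+$ for all $x\in\var(\alpha)$. Every constraint $(r_{\varepsilon_=},x_i,x_j)\in r_\alpha$ is then satisfied, since $(h(x_i),h(x_j))\in\Sigma^+\times\Sigma^+\subseteq r_{\varepsilon_=}$, so $h$ is $r_\alpha$-valid.

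One subtlety needs care. A constraint may mention a variable $x_j\notin\var(\alpha)$, and then $h(x_j)$ is not forced to be nonempty by the non-erasing requirement. This is harmless because $h$ may be chosen freely outside $\var(\alpha)$ without changing $h(\alpha)$. So we may set $h(x)=\ta$ for every such variable, for some fixed $\ta\in\Sigma$, and all constraints are still met. With this, the language identity holds in full generality. It is the only point I expect to require any attention.

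Second, since $\alpha\in RegPat_\Sigma$ by definition of relational patterns, membership $w\in L_{NE}(\alpha)$ is decidable in linear time by Shinohara~\cite{Shinohara1983}. The algorithm is a greedy left-to-right matching. Write $\alpha=u_0 y_1 u_1\cdots y_k u_k$, where the $u_i$ are terminal blocks and the $y_i$ are maximal variable blocks. Each variable block $y_i$ of length $m$ can absorb any factor of length at least $m$. So one checks that $u_0$ is a prefix and $u_k$ a suffix of $w$, then scans for leftmost occurrences of each $u_i$ after reserving $|y_i|$ letters. This takes linear time using a linear-time string matching routine.

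Finally, the input $(\alpha,r_\alpha)$ is handled by simply ignoring $r_\alpha$. Reading past $r_\alpha$ costs at most linear time in the input size, so the overall procedure runs in linear time.
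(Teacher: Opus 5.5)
Your proposal is correct and follows the paper's argument: the paper also obtains the corollary from the identity $L_{NE}(\alpha,r_\alpha)=L_{NE}(\alpha)$ (the Remark on $r_{\varepsilon_=}$ in the non-erasing equivalence section), combined with Shinohara's linear-time membership test for regular patterns. Your explicit check of that identity, including setting $h(x)=\ta$ for constrained variables outside $\var(\alpha)$, fills in a detail the paper leaves implicit.
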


This concludes our discussion of the membership problem. For alphabets of size at least two, except for $r_{\varepsilon_=}$ in the non-erasing case, all considered cases are NP-complete. Thus, hardness persists even for structurally weak relations such as $r_{\varepsilon_=}$, $r_{com^*}$, and $r_{com^+}$, confirming that matching words against compact relational pattern descriptions is generally computationally hard.

% === CONCLUSION === %
\section{Final Discussion}
\label{sec:conclusion}
We considered the three main decision problems equivalence, inclusion, and membership problems for relational pattern languages under various typical binary relations, i.e., $r_=$, $r_{|w|}$, $r_{ssq}$, $r_{ab}$, $r_{perm}$, $r_{rev}$, $r_{com^+}$, $r_{com^*}$, $r_{*}$, and $r_{\varepsilon_=}$. For all problems, we obtained a selection of combinatorial and complexity results that can serve as a basis for several settings where encoding problems into the relational pattern languages with different relations might be a natural choice. In addition to that, these results together with the remaining open cases discussed in the respective sections can serve as a basis for future research in all the directions considered in this paper.

In addition to these results, the problems considered in this paper could also be considered for other natural relations that have not been considered in this context, yet. Among these, for example, the non-equal relation that allows for everything but equal substitutions among related variables is an interesting candidate that also finds relevance in the literature (e.g., see \cite{Manea2020}). There, it has been shown that it is NP-complete to decide whether a word $w$ can be factorized into $k$ distinct factors. By constructing a pattern $x_1...x_k$ and relating all variables with each other using non-equality, this directly reduces to the relational pattern language setting using that relation, resulting in NP-completeness of Membership. It would be interesting to know how the inclusion and equivalence problems behave with respect to this relation. Another candidate could be a cyclic shift relation that restricts substitutions of related variables to conjugates of a word. As an equivalence relation, we expect similar results as obtained in this paper to also be obtainable in this case. 

Finally, we only considered patterns using a single relation. Allowing multiple relations could increase expressiveness substantially: decidable single-relation cases might become undecidable when relations are combined, and open cases such as erasing equivalence under equality might become undecidable as well when additional relations weaker than regular or length constraints are allowed.

\subsubsection*{Acknowledgments.}
We thank the anonymous reviewers for their valuable comments and suggestions.

\bibliographystyle{eptcsini}
\bibliography{20-bibliography}

\end{document}